\documentclass[letterpaper]{article} 
\usepackage{aaai2026}  
\usepackage{times}  
\usepackage{helvet}  
\usepackage{courier}  
\usepackage[hyphens]{url}  
\usepackage{graphicx} 
\urlstyle{rm} 
\usepackage{natbib}  
\usepackage{caption} 
\frenchspacing  
\setlength{\pdfpagewidth}{8.5in} 
\setlength{\pdfpageheight}{11in} 
%
\usepackage{algorithm}
\usepackage{algpseudocode}

%
\usepackage{newfloat}
\usepackage{listings}
\DeclareCaptionStyle{ruled}{labelfont=normalfont,labelsep=colon,strut=off} 
\lstset{%
	basicstyle={\footnotesize\ttfamily},
	numbers=left,numberstyle=\footnotesize,xleftmargin=2em,
	aboveskip=0pt,belowskip=0pt,%
	showstringspaces=false,tabsize=2,breaklines=true}
\floatstyle{ruled}
\newfloat{listing}{tb}{lst}{}
\floatname{listing}{Listing}
%
\pdfinfo{
/TemplateVersion (2026.1)
}

\usepackage{amsmath,amssymb,amsthm,amsfonts}
\usepackage{float}
\usepackage{subcaption}
\usepackage{xcolor}
\usepackage{multirow,array}
\usepackage{cleveref}
\usepackage{bm}
\usepackage{multicol}


\newtheorem{defn}{Definition} 
\newtheorem{lem}{Lemma} 
\newtheorem{asm}{Assumption} 
 
\newtheorem{thm}{Theorem} 
\newtheorem{remark}{Remark} 
\newtheorem{corollary}{Corollary}


\newcommand{\Prob}{\mathbb{P}}

\newcommand{\N}{\mathcal{N}}

\newcommand{\VaR}{\mathrm{VaR}}


\setcounter{secnumdepth}{0} 

%


\title{Escaping the Subprime Trap in Algorithmic Lending}
\author{
    Adam Bouyamourn and Alexander Tolbert
}
\affiliations{
}

\usepackage{bibentry}

\begin{document}

\maketitle

\begin{abstract}
Disparities in lending to minority applicants persist even as algorithmic lending finds widespread adoption. We study the role of \emph{risk-management} constraints, specifically Value-at-Risk ($\VaR$) and Expected Shortfall (ES), in inducing inequality in loan approval decisions, even among applicants who are equally creditworthy. Empirical research finds that disparities in the interest rates charged to minority groups can remain large even when loan applicants from different groups are equally creditworthy. We contribute an original analysis of 431,551 loan applications recorded under the Home Mortgage Disclosure Act, illustrating that disparities in data quality are associated with higher rates of loan denial and higher interest rate spreads for Black borrowers.
We develop a formal model in which a mainstream bank (low-interest) is more sensitive to variance risk than a subprime bank (high-interest). If the mainstream bank has an inflated prior belief about the variance of the minority group, it may deny that group credit indefinitely, thus never learning the true risk of lending to that group, while the subprime lender serves this population at higher rates. We call this ``The Subprime Trap'': an equilibrium in which minority lenders can borrow only from high-cost lenders, even when they are as creditworthy as majority applicants. Finally, we show that a finite subsidy can help minority groups \emph{escape the trap}: subsidies cover enough of the mainstream bank's downside risk so that it can afford to lend to,  and thereby learn the true risk of lending to, the minority group. Once the mainstream bank has observed sufficiently many loans, its beliefs converge to the true underlying risk, it approves the applications of minority groups, and competition drives down the interest rates of subprime loans.
\end{abstract}

\section{Introduction}
\label{sec:intro}
Algorithmic lending has grown rapidly as scalable ML methods achieve wide adoption among both existing lenders and new market entrants, bringing with them the possibility of significantly improving the fairness of financial decisions based on observable data about loan applicants \citep{Khandani2010,Bono2021,Bergetal2022,Remolina2022}. However, inequalities persist in both loan approval rates and interest rates charged to minority applicants versus white applicants, and a switch to algorithmic lending procedures does not necessarily improve outcomes on either metric \citep{Romei2013, liobait2017, Quillian2020, Giacoletti2021-ts,aliprantis2022dynamics,Bartlett2022, fuster2021}. These inequities exist against a background of historic discrimination in US retail banking to individuals \citep{Quillian2020}, businesses \citep{Branchflower2003}, and a long-standing ``racial wealth gap": persistent differences in median household wealth by ethnic group \citep{Charles2002,derenoncourt2023, althoff2024jim}. 
 
Notably, disparities in interest rates and loan approvals persist even when minority applicants have comparable credit scores to majority applicants \citep{Bayer2016,Popick2022, Crosignani2023}. This combination of facts creates a puzzle: are lenders selecting on observable risk factors? And if not, why, besides explicit discrimination, might they be failing to lend to minority applicants?

Several empirical findings help us to understand this puzzle. First, ethnic groups sort across lenders, with minorities more likely to accept loans from high-interest rate banks versus conventional lenders, even conditional on credit score. \citep{Bayer2016} write:

\begin{quote}
African-American and Hispanic borrowers tend to be more concentrated at high-risk lenders. Strikingly, this pattern holds for all borrowers even those with relatively unblemished credit records and low-risk loans... High-risk lenders are not only more likely to provide high cost loans overall, but are
especially likely to do so for African-American and Hispanic borrowers. These lenders are largely responsible for the differential treatment of equally qualified borrowers.
\end{quote}

Modern algorithmic lending systems operationalize risk assessment through machine learning models trained on historical data. Minority applicants disproportionately have ``thin credit files'' with limited payment history data, and this data sparsity directly manifests as higher prediction variance in ML models \citep{blattner2021how}. When banks face risk-management constraints, the higher predicted variance translates into higher rates of loan denial. Because denied applicants never generate repayment data, banks cannotthen learn that their variance estimates were inflated, creating a self-reinforcing equilibrium of exclusion.

It is this set of stylized facts that motivates our model. How do minorities with good credit scores end up with no choice but to accept high-cost loans? 
To answer this question, we consider the role of risk management constraints when banks have imperfect information about their applicant pool. Lending is a problem of imperfect information: banks do not know the exact probability that a given borrower will repay a loan, and so must use observable characteristics as a proxy for repayment probability \citep{Adams2009, Crawford2018}. Banks rely on credit scoring, which is subject to known biases, and has lower quality information about the creditworthiness of minority applicants \citep{blattner2021how}. Reliance on credit scores can therefore lead banks to hold inaccurate prior beliefs about minority applicants, which can sustain equilibria in which minority groups systematically lose out \citep{Loury-Coate, Kim_Loury_2018}. 

We study the effect of Value-at-Risk ($\VaR$) constraints on loan approval decisions \citep{Artzner1999, Jorion2000-ng}. Our analysis focuses on a formal model in which a mainstream, or ``low interest rate", bank exhibits heightened sensitivity to variance risk relative to a subprime, or ``high interest rate", bank. We show that when the mainstream bank has inflated prior beliefs about the variance repayments from the minority group, $\VaR$ constraints bind, which means that it will then systematically refrain from extending credit to that group. This reluctance then prevents the bank from updating its beliefs regarding the true risk, thereby locking minority borrowers into higher-cost subprime lending arrangements. We describe this mechanism as ``the subprime trap".

\subsection{Our Contributions}
First, we contribute an analysis of risk management constraints to the study of algorithmic fairness. We help address a puzzle of general interest: why might banks fail to extend loans even when they would otherwise be profitable? Second, we bring details of real-world banking practices into the study of algorithmic fairness.
We introduce a two-bank equilibrium framework in which the mainstream bank enforces a more stringent $\VaR$ requirement than its subprime counterpart. Under these conditions, an erroneous initial estimate of the variance of the minority group leads the mainstream bank to permanently refrain from lending to that group. Third, we formalize the resulting equilibrium (\Cref{thm:SubprimeTrap}) and demonstrate that the minority group is consequently confined to subprime loans, despite having the same average creditworthiness as the majority group. Fourth, we establish that the provision of a modest, finite subsidy, or partial guarantee, can resolve this equilibrium inefficiency. By adequately mitigating downside risk, the subsidy enables the main bank to lend and learn about the creditworthiness of the minority group. With an updated assessment of risk, the bank eventually meets its $\VaR$ requirements without continued external intervention, thereby extending mainstream credit at favorable rates.

This investigation contributes to the literature by highlighting how inaccurate risk metrics, in conjunction with risk-based capital constraints, can systematically exclude certain borrower groups from low-interest credit. Our results further imply that targeted subsidies can effectively rectify informational failures, yielding improved outcomes for both lenders and minority borrowers.

\subsection{Related Literature}

Our work contributes to several strands of literature, including algorithmic fairness in lending, the economics of discrimination, and the role of informational frictions in sustaining suboptimal market equilibria. In the domain of algorithmic fairness, recent research has examined how increasingly sophisticated data-driven methods in credit scoring may inadvertently perpetuate or exacerbate disparities \citep{Barocas2016, barocas2023fairness, Kumar2022}. While algorithmic approaches promise enhanced accuracy, empirical evidence indicates that minority applicants often face higher interest rates or are denied credit outright \citep{Bartlett2022, Crosignani2023}. 

Discrimination is a classical topic in economics, studied in a large theoretical and empirical literature \citep{Becker1971-ee,Phelps_1972,Lundberg_1983,Arrow1998, Heckman1998, Cowgill2019}. Building on this tradition, subsequent studies have explored how negative stereotypes or misperceptions of risk may translate into adverse outcomes for minority borrowers \citep{Fang2011, Loury-Coate}. In a manner analogous to these analyses, our work demonstrates that inflated beliefs about repayment variance can result in a self-reinforcing equilibrium whereby minority groups are persistently relegated to subprime lending markets.

A related literature examines the role of informational frictions and the corrective impact of policy interventions \citep{Akerlof1970,Akerlof2015-du, Thaler2016}. \citep{Cai_2020} studied contexts in which banks can acquire improved information about borrowers' creditworthiness through selective experimentation. \citep{donahue_barocas2021} study the trade-offs between solidaristic insurance policies, in which minorities are subsidized in accordance with risks that are causally related to their minority status, and actuarially-fair insurance policies, in which everyone pays premia equal to their marginal risk.  

Our work also relates to work on delayed feedback with respect to implementation decisions in machine learning settings \citep{liu2018delayed, Pagan_2023, Chaney_2018}. We describe a setting in which mainstream banks are stuck in a negative feedback loop: their failure to learn about applicant creditworthiness is self-sustaining. Our work is also related to literature on suboptimal outcomes with bandits. \citep{honda2013optimalitythompsonsamplinggaussian} prove that Thompson Sampling's optimality depends critically on prior specification, while \citep{ghosh2017misspecifiedlinearbandits} show that any algorithm achieving optimal performance under perfect model specification must suffer linear regret under misspecification. This mirrors our setting where banks with incorrect variance priors never explore lending to certain groups. The dueling bandits framework of \citep{Yue2012} captures competition between lenders, establishing information-theoretically optimal algorithms for pairwise comparisons. \citep{Frazier_2014} study how to incentivize exploration when arms are pulled by self-interested agents -- our approach, in which subsidies encourage banks to `learn through lending', is closely related to this setting. 

Finally, our work contributes to a broader effort to model rational but suboptimal decision-making by or with respect to disadvantaged groups.\citep{diana2024adaptivealgorithmicinterventionsescaping} model pessimism traps and herding behavior as individuals grapple with cycles of noisy and censored information when making decisions about potentially higher reward but riskier ends. Our argument is similar in broad outline to that of \citep{FOSTER1992, Hu_2018}, who argue that temporary interventions in the labor market to address discrimination may lead to long-run improvements in the fairness of observed decisions. 

\section{Empirical Motivation}

We analyzed 2024 Home Mortgage Disclosure Act (HMDA) data, examining 431,551 conventional home purchase mortgage applications by single applicants listed as Black or White on loan applications. The HMDA requires that financial institutions that report at least 60,000 applications and loans in the previous calendar year must report data on lending applicants and loan outcomes. We examine racial disparities in two lending outcomes: loan denial decisions ($N = 431,551$) and interest rate spreads conditional on approval ($N = 247,329$).

\begin{table}[h]
\centering
\caption{Lending Outcomes by Race and Data Quality}
\begin{tabular}{lccc}
\hline\hline
\multicolumn{4}{l}{\textbf{Panel A: Denial Rates}} \\
\hline
 & Complete Data & Missing Income & Difference \\
\hline
Black & 26.1\% & 57.6\% & +31.5pp \\
White  & 9.8\% & 23.1\% & +13.3pp \\
\hline
Differential  & & & +18.2pp \\
\hline
& & & \\
\multicolumn{4}{l}{\textbf{Panel B: Interest Rate Spreads\footnote{Rate spread measures how much more expensive a borrower's loan is compared to the best rates available to prime borrowers at the time of origination. A spread of 1\% means the borrower pays 1 percentage point more in annual interest than the benchmark rate.} On Approved Loans}} \\
\hline
 & Complete Data & Missing Income & Difference \\
\hline
Black  & 0.85\% & 1.12\% & +0.27pp \\
White & 0.62\% & 0.74\% & +0.12pp \\
\hline
Differential  & & & +0.15pp \\
\hline\hline
\end{tabular}
\label{tab:lending_outcomes}
\end{table}

This analysis highlights that Black applicants with missing income data  face higher loan denial rates both marginally, and conditional on having missing data; and, if approved, pay higher interest rate spreads both marginally, and conditional on having missing income data. 

While this analysis is descriptive, it suggests a puzzle: why do differences in data quality lead to higher rejection rates and higher interest rates for Black applicants specifically? We propose an explanation for this observed phenomenon in our formal model below. 

\section{Preliminaries and Model}
\label{sec:prelim}

We analyze a multi-period environment indexed by 
$t = 1, 2, \ldots, T$. At the 
onset of each period $t$, the high-interest (subprime) bank selects an interest-rate 
premium $\nu_t \in [0, \nu^{\text{max}}] $. Each loan applicant belongs to one of two groups and applies simultaneously to both the mainstream 
(low-interest) bank and the subprime bank. The mainstream bank charges a baseline 
rate normalized to $1$, while the subprime bank charges $1 + \nu_t$.

Each bank evaluates whether to approve or reject each applicant in that period, 
seeking to maximize its expected profit subject to a risk-management constraint. If an applicant receives approvals from both banks, the applicant typically accepts the offer with the lower interest 
rate; if only one bank approves the loan, it is accepted. After the loans are finalized, each bank privately observes its payoff. Consequently, a bank that  rejects an application or whose offer is declined does not observe the repayment 
outcome for that applicant.

At the end of period $t$, both banks update their beliefs regarding the repayment 
distribution for each group, employing a Bayesian learning procedure described in 
Section~\ref{sec:updating}. These updated beliefs affect the subprime bank's 
subsequent choice of premium $\nu_{t+1}$ and both banks' lending decisions in 
period $t+1$. 

\subsection{Loan Applicants}

\subsubsection{Payoff distributions}

\label{sec:applicants}
There are two groups, denoted by $i \in \{W,B\}$. $W$ represents the racial majority, while $B$ represents the racial minority. Each group has a payoff distribution $\pi_{it} \sim \N(\mu_i,\sigma^2_i)$. The bank's return from approving a loan depends on this realized payoff. 
\begin{asm}[Equal expected creditworthiness]
$$\forall t: \mathbb{E}\left[\pi_{Wt}\right] = \mu_W = \mu_B = \mathbb{E}\left[\pi_{Bt}\right] > 0$$
\end{asm}
\noindent First, we assume that both groups have the same expected creditworthiness. We do this because we want to study the specific case where minority and majority racial groups are equally creditworthy.  We allow their variances, $\sigma_W^2$ and $\sigma_B^2$, to differ.

Second, we assume that this is \textit{common knowledge}: both banks know that each group is equally creditworthy. 

We suppose that each group strictly prefers to accept a loan rather than not, and strictly prefers to accept the cheapest loan. 
We assume that applicants randomize if both loans have the same price. We denote each group's decision to accept the loan with $S_i \in \{0,1\}$.
\subsubsection{Observable Applicant Characteristics}

We suppose that individuals are associated with repayment-probability-relevant applicant characteristics: repayment history, assets, income, net debt, and so forth. We suppose that this information is differentially missing by group: minority lenders have fewer complete characteristics. We formalize this simply by supposing that for each group, the number of complete applicant files, or the \textit{credit history for group }$i$, $C_i$ is binomial, so that we observe $np_i$ complete applicant files and $n(1-p_i)$ incomplete applicant files. 

\begin{asm}[Differential Information about Applicants]
    $$p_B < p_W \text{ so that }\mathbb{E}[C_B] < \mathbb{E}[C_W]$$
\end{asm}

\noindent For simplicity, we set $p_W = 1$, and $p_B < 1$. This is WLOG, because what matters for the derivation of our results is the difference in information between the groups. 

\subsection{Lenders}

The model features two banks: Bank $L$, which offers low-interest (mainstream) loans, and Bank $H$, which provides high-interest (subprime) loans. We refer to  banks with the index $j$. In each period, both banks decide whether to extend a loan to each group. $A_{ijt} \in \{0,1\}$ denotes whether bank $j$ approves a loan to group $i$ at time $t$. 
\subsubsection{Bank payoff functions}

If the bank issues a loan to an applicant from group $i$ in period $t$, it earns the payoff $\pi_{it}$.  
 We normalize the interest rate of Bank $L$ to $1$, so that its profit $\Pi$ in period $t$ is given by 

    $$\Pi_{Lt} = \sum_i S_{iLt}A_{iLt}\pi_{it}$$
Bank $H$ charges an interest rate of $1+\nu_t$. It has profit function:
$$
    \Pi_{Ht} = \sum_i S_{iHt}A_{iHt}^{(t)}\left[ (1+\nu_t)\max\left\{\pi_{it},0\right\}+\min\left\{\pi_{it},0\right\}\right].
$$
Here, positive returns are amplified by a factor of $1+\nu_t$, whereas losses are incurred on a dollar-for-dollar basis. 
\subsubsection{Risk Management Constraints}
\label{sec:constraints}
We next suppose that banks face a \textit{risk management} or \textit{solvency} constraint. That is, there is some level of financial loss that is `unacceptable' to the bank. This may be due to regulatory constraints, such as Basel III or Dodd-Frank, or liquidity constraints \citep{VaR_Basel}. This loss threshold is probabilistic: the bank is willing to accept some nonzero risk that their loss from making a loan falls below a certain threshold, but wants to specifically limit the probability that this occurs to less than a risk tolerance $\alpha\%$. Typically, this value is set to $1\%$ or 
$5\%$ \citep{Jorion2000-ng}.

We model this with a per-period Value-at-Risk ($\VaR$) constraint \citep{Artzner1999}. That is, we require that the bank's anticipated profit in each period must be greater than some constant $\rho < 0$ with probability at least $1-\alpha$. We can think of $\rho$ as the bank's maximum acceptable loss. Equivalently, the bank is willing to accept an $\alpha\%$ risk that their profit will fall below the bank's risk management threshold $\rho$.    
\begin{defn}[Value-at-Risk ($\VaR$)]
$$VaR_\alpha(X) = - \inf\left\{x \,|\, \Prob\left[X \leq x \right] > \alpha \right\}$$ 
\end{defn}

We adapt this by requiring that each bank $j$ in period $t$ faces the constraint that: 
    $$ \Prob\!\left( \Pi_{jt} < \rho \right) \le \alpha$$
This simply states that the bank will accept a risk of at most $\alpha\%$ that their profit in period $t$ falls below $\rho$.
 
We suppose that this constraint is \textit{lexically prior} to the profit-maximization objective: the bank must satisfy the VaR constraint in order to lend at all. 

 Combining each bank's payoff function with their VaR constraint, we can write each bank's optimization problem as follows: 
 
    $$\underset{\bm{A}_{jt}}{\arg\max}\;\; \Pi_{jt}\left(\bm{A}_{jt}\right) \quad \text{subject to} \quad \Prob\!\left( \Pi_{jt}\left(\bm{A}_{jt}\right) \leq \rho \right) \le \alpha$$

This says simply that each banks chooses the approval decisions that maximize profits subject to satisfying their VaR constraint.  As we shall see below, the maximand depends on the expected value of the applicants' payoff function $\mu_i$, while the VaR depends on both the mean and the variance of the applicant group's payoff function. 

\subsubsection{Prior Bank Beliefs}

The actual profit $\Pi_{jt}$ is unknown \textit{ex ante}, however: it is only observed once a loan has been approved and accepted. Banks must therefore instead make decisions based on prior beliefs about the profitability of approving a loan. 

Note that, at $t = 1$, the bank cannot use observed repayment information in the determination of the loan application. Instead, we suppose that the bank uses historical credit data to estimate the risk of lending to specific groups. 

Denote by $\hat{\Pi}_{jt}$ the bank's posterior estimate of profit from lending at time $t$. Each bank instead must solve the \textit{feasible} problem:

   $$\underset{\bm{A}_{jt}}{\arg\max}\;\; \hat{\Pi}_{jt}\left(\bm{A}_{jt}\right) \quad \text{s. t.} \quad \Prob\!\left( \hat{\Pi}_{jt}\left(\bm{A}_{jt}\right) \leq \rho \right) \le \alpha$$

\noindent Which, importantly, depends on bank beliefs about applicants. 

We assume that each bank $j$ has initial prior beliefs regarding the parameters of the distribution of returns from issuing a loan to a member of group $i$. 

We assume, as before, that the mean of both payoff distributions is known and common knowledge. We assume that the scale of each distribution is unknown, however. 

\begin{defn}[Bank's prior beliefs]
We suppose that each bank has an Inverse-Gamma prior on the variance of each group, so that:
$$\sigma^2_{ij,0} \sim \text{Inv-}\Gamma(a_0, b_0)$$
\end{defn}

\subsubsection{Differential Credit Records Drive Prior Variances}

We suppose that each bank estimates a model of repayment probability based on historical credit data to form its initial beliefs. That is, on having observed a credit file with effective sample size $C_i = np_i$, each bank forms a prior about each group's variance as follows:
$$\sigma^2_{ij} | C_i \sim \text{Inv-}\Gamma\left(a_0 + \dfrac{np_i}{2},b_0 + \dfrac{np_iS^2_i}{2}\right)$$

Since we suppose that $p_B < p_W = 1$, we have (assuming that $S^2_B = S^2_W = 1$):
\begin{align*}
\sigma^2_{Bj,0} | C_B &\sim \text{Inv-}\Gamma\left(a_0 + \dfrac{np_B}{2},b_0 + \dfrac{np_B}{2}\right) \\ 
\sigma^2_{Wj,0} | C_W &\sim \text{Inv-}\Gamma\left(a_0 + \dfrac{n}{2},b_0 + \dfrac{n}{2}\right)
\end{align*}
So that:
\[\hat{\sigma}^2_{Bj,0} = \mathbb{E}[\sigma^2_{Bj,0} | C_B]  >  \mathbb{E}[\sigma^2_{Wj,0} | C_W] = \hat{\sigma}^2_{Wj,0}\]

Or, in other words, based on observable credit histories, each bank has a prior belief that the variance of the minority group is larger than the variance of the majority group. 

\subsubsection{Learning Through Lending}\label{sec:updating}
In each period, if a bank extends a loan to an applicant from group $i$, it observes the realized repayment and updates its risk assessment for that group via Bayesian updating. 

\begin{lem}[Belief Updating via Bayes Rule]
If bank $j$ lends to group $i$, it observes a return $\pi_{it}$ and then updates its posterior beliefs about the variance $\sigma_i^2$. Suppose that the bank observes a sequence of returns $\{\pi_1 ,\pi_2, \dots, \pi_M\}$. Then, it has posterior belief about the variance of group $i$: 
\begin{align*}
&\sigma_B^2 | \pi_1, \dots, \pi_M \sim  \\
&\text{Inv-}\Gamma \left(a_0 + \dfrac{np_B + M}{2}, b_0 + \dfrac{np_B + \sum_{m=1}^M(\pi_{Bm} - \mu_B
)^2}{2} \right)
\end{align*}
So that:
\[\mathbb{E}[\sigma_B^2 | \pi_1, \dots, \pi_M ] = \dfrac{b_0 + \dfrac{np_B + \sum_{m=1}^M(\pi_{Bm} - \mu_B
)^2}{2}}{a_0 + \dfrac{np_B + M}{2} - 1}\]
\end{lem}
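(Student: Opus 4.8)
The plan is to exploit the conjugacy between a Gaussian likelihood with \emph{known} mean and an Inverse-Gamma prior on the variance. Because Assumption~1 makes $\mu_B$ known and common knowledge, the only unknown parameter is $\sigma_B^2$, so we are in the single-parameter conjugate setting rather than a joint Normal-Inverse-Gamma setting over $(\mu_B, \sigma_B^2)$. This reduces the whole update to a direct kernel-matching computation, followed by reading off the posterior mean.

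First I would record the prior density. Writing the credit-file-induced prior as $\text{Inv-}\Gamma(a, b)$ with $a = a_0 + np_B/2$ and $b = b_0 + np_B/2$, its density satisfies $p(\sigma_B^2) \propto (\sigma_B^2)^{-(a+1)} \exp(-b/\sigma_B^2)$. Next I would write the likelihood: since each $\pi_{Bm} \sim \N(\mu_B, \sigma_B^2)$ independently with $\mu_B$ known, the joint likelihood is $L(\sigma_B^2) \propto (\sigma_B^2)^{-M/2}\exp\!\bigl(-\tfrac{1}{2\sigma_B^2}\sum_{m=1}^M (\pi_{Bm}-\mu_B)^2\bigr)$. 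The sufficient statistic is $\sum_{m=1}^M (\pi_{Bm}-\mu_B)^2$; crucially, because the mean is known there is no degrees-of-freedom correction, which is precisely what yields the clean additive update.

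I would then multiply prior by likelihood and collect terms. The product is proportional to $(\sigma_B^2)^{-(a + M/2 + 1)}\exp\!\bigl(-\tfrac{1}{\sigma_B^2}\bigl(b + \tfrac12\sum_{m=1}^M (\pi_{Bm}-\mu_B)^2\bigr)\bigr)$, which I recognize as the kernel of an $\text{Inv-}\Gamma\bigl(a + M/2,\; b + \tfrac12\sum_{m=1}^M (\pi_{Bm}-\mu_B)^2\bigr)$ density. Substituting back $a = a_0 + np_B/2$ and $b = b_0 + np_B/2$ gives exactly the claimed shape parameter $a_0 + (np_B + M)/2$ and rate parameter $b_0 + \tfrac12\bigl(np_B + \sum_{m=1}^M (\pi_{Bm}-\mu_B)^2\bigr)$. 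The posterior mean then follows by plugging these into the standard Inverse-Gamma mean formula $\E[X] = b/(a-1)$, valid for $a > 1$, reproducing the stated expression.

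The computation has no genuine obstacle; the only points requiring care are (i) invoking Assumption~1 to justify treating $\mu_B$ as known, so that the Inverse-Gamma prior is genuinely conjugate and no joint prior over $(\mu_B, \sigma_B^2)$ is needed, and (ii) tracking that the sufficient statistic uses the \emph{known} mean $\mu_B$ rather than a sample mean, which is why the shape parameter increments by exactly $M/2$ with no further adjustment. A minor bookkeeping point is to confirm $a_0 + (np_B+M)/2 > 1$ so the posterior mean exists, which holds for the relevant regimes of $a_0$ and $M$.
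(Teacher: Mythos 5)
Your proposal is correct and follows essentially the same conjugate-prior, kernel-matching argument as the paper's proof: a Gaussian likelihood with known mean $\mu_B$ multiplied against an Inverse-Gamma prior on $\sigma_B^2$, with the posterior read off by matching the Inverse-Gamma kernel and the mean obtained from $b/(a-1)$. The only (immaterial) difference is that you take the credit-file-induced prior $\text{Inv-}\Gamma\left(a_0 + \frac{np_B}{2},\, b_0 + \frac{np_B}{2}\right)$ as given from the main text, whereas the paper's appendix first re-derives it from the base prior $\text{Inv-}\Gamma(a_0, b_0)$ and the historical credit data before performing the identical update on the $M$ observed returns.
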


In other words, the bank's posterior belief about the variance of the minority group depends both on the depth of historical credit files ($np_B$) and the observed variance of repayments ($\sum_{m=1}^M(\pi_{Bm} - \mu_B
)^2$), conditional on lending to the minority group. 

A key assumption in our analysis is that a bank updates its estimate of the variance for group $i$ only when it observes repayment outcomes from that group. Thus, if a bank never lends to group $B$, it will not receive data on the performance of loans to that group, and incorrect prior belief about $B$'s variance will not be updated.

\subsection{Our Modeling Choices}

\noindent\textbf{VaR and CVaR/ES} Value-at-Risk is a commonly used risk management measure in practice, and is explicitly referenced in the Basel III banking regulations \citep{VaR_Basel, CHANG2019}. We could also study the Conditional Value-at-Risk, or Expected Shortfall, which is related to VaR via the formula: $ES(X) = \frac{1}{\alpha}\int_0^{\alpha}VaR_\phi
(X)\: d\phi$. The intuitions underlying our model carry through to the CVaR/ES risk metric: there is a threshold variance at which the CVaR/ES will be violated. We show that the CVaR/ES threshold is more conservative than the VaR risk metric, and hence requires a larger subsidy to escape the subprime trap, in the Appendix. \\

\noindent\textbf{Inaccurate prior beliefs} Our model supposes that banks have inaccurate beliefs about the creditworthiness of applicants. This is a plausible outcome when decisions are made on the basis of credit scoring data that is either biased or noisy. Credit scores may be inherently noisier representations of underlying default risk for minority groups \citep{blattner2021how}, while the components of the score may themselves be low quality indicators of repayment ability \cite{Rice2013DiscriminatoryEO}. Empirical work shows that banks are more likely to avoid lending to minority neighborhoods, thereby lowering the quality of information received about lenders in those neighborhoods \cite{blattner2021costlynoisedatadisparities}.  Noisier estimates of credit risk are likely to lead to inflated prior estimates of variance in practice. \\

\noindent\textbf{Risk-pooling} We study three cases: where banks lend unilaterally to each group, and where banks lend to both groups. In the unilateral setting, there is no risk pooling, and banks make decisions based only on single group characteristics. However, when banks are willing to lend to both groups, risk pooling occurs, where the lower-risk group essentially subsidizes the risk of lending to the higher-risk group. This case is studied in \citep{donahue_barocas2021}. Below, we derive variance thresholds for each of the three cases.

\section{Single-Period Setting}
\label{sec:single-period}

We first study the stage game, to show how the VaR constraint affects lending to group $B$ in each period.
Note first that we assume that Bank H charges an interest rate premium to derive variance thresholds. We then prove that this occurs. 
Second, note that we derive variance thresholds for the pooled variance -- thresholds under which banks will lend to both groups. Because the variance thresholds are ordered, it follows that if the bank's variance pooled threshold is not satisfied, it will not unilaterally lend to the minority group. It also follows that if the bank's pooled threshold is satisfied, the bank will unilaterally lend to the majority group. 

First we assume that Banks have unilateral thresholds for lending to each group: 

\begin{lem}[Unilateral Lending Thresholds]
\begin{align*}
   \tilde{\sigma}^2_L &= \left(\dfrac{\rho_L - \mu}{\Phi^{-1}(\alpha)}\right)^2 \\
    \tilde{\sigma}^2_H &=\left(\dfrac{\rho_L - (1+\nu)\mu}{\Phi^{-1}(\alpha)}\right)^2
\end{align*}
\end{lem}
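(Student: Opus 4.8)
The plan is to reduce each bank's single-group $\VaR$ constraint to a Gaussian tail inequality and then invert it for $\sigma^2$. First I would establish the generic fact that if a bank's profit on a given group is (modeled as) Gaussian with mean $m$ and variance $s^2$, then the constraint $\Prob(\text{profit} \le \rho) \le \alpha$ is equivalent to $\CDF\!\left(\tfrac{\rho - m}{s}\right) \le \alpha$, i.e. $\tfrac{\rho - m}{s} \le \CDFinv(\alpha)$. The bank is then willing to lend to that group precisely when $s^2$ lies below the threshold obtained by solving this inequality for $s^2$, and the two stated thresholds are the two instantiations of this master computation.

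For Bank $L$, its per-applicant profit is exactly $\pi \sim \N(\mu,\sigma^2)$, so I substitute $(m,s^2) = (\mu,\sigma^2)$. The one point requiring care is sign-tracking: since $\alpha < \tfrac12$ we have $\CDFinv(\alpha) < 0$, and since $\rho < 0 < \mu$ we have $\rho - \mu < 0$; dividing $\tfrac{\rho-\mu}{\sigma} \le \CDFinv(\alpha)$ by the negative quantity $\CDFinv(\alpha)$ flips the inequality to $\sigma \le \tfrac{\rho - \mu}{\CDFinv(\alpha)}$, whose right-hand side is positive. Squaring (both sides positive) yields $\tilde\sigma^2_L = \left(\tfrac{\rho - \mu}{\CDFinv(\alpha)}\right)^2$, with the constraint binding at equality.

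For Bank $H$ I would run the identical inversion but with the profit distribution induced by its kinked payoff $(1+\nu)\max\{\pi,0\} + \min\{\pi,0\} = \pi + \nu\max\{\pi,0\}$. This is where the real work lies, since this profit is a piecewise-linear transform of a normal and hence not exactly Gaussian. I would compute its mean, $\E[\Pi_H] = \mu + \nu\,\E[\max\{\pi,0\}]$, and argue that in the operating regime $\mu > 0$ (most mass positive) the amplified mean is well approximated by $(1+\nu)\mu$, while the left tail—where $\pi<0$ and the payoff passes through dollar-for-dollar—retains scale $\sigma$. Modeling $\Pi_H$ as $\N((1+\nu)\mu,\sigma^2)$ and substituting $(m,s^2) = ((1+\nu)\mu,\sigma^2)$ into the master inversion gives $\tilde\sigma^2_H = \left(\tfrac{\rho - (1+\nu)\mu}{\CDFinv(\alpha)}\right)^2$.

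The main obstacle is justifying exactly this Gaussian form for Bank $H$: the \emph{exact} event $\{\Pi_H \le \rho\}$ with $\rho<0$ coincides with $\{\pi \le \rho\}$ (because $\pi \ge 0$ forces $\Pi_H \ge 0 > \rho$), which would spuriously equate the two thresholds. The derivation must therefore be anchored on the upside mean amplification $(1+\nu)\mu$ that the interest premium produces, rather than on the raw lower tail, and I would state this modeling step explicitly. It is precisely this wedge—$\tilde\sigma^2_H > \tilde\sigma^2_L$, immediate from $(1+\nu)\mu > \mu > 0$ together with $\CDFinv(\alpha)<0$—that drives the later subprime-trap result, since it creates a band of variances on which only the high-interest bank will lend. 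A final sanity check is that $\tilde\sigma^2_H$ is increasing in $\nu$, confirming that a larger premium relaxes the risk constraint, as the economics of the model require.
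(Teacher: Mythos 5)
Your derivation follows the same route the paper uses: write the VaR constraint as a Gaussian tail probability $\CDF\bigl((\rho - m)/s\bigr) \le \alpha$, invert it with the sign of $\CDFinv(\alpha) < 0$ tracked correctly, and square. The paper never gives this lemma a dedicated proof (its appendix derives the unilateral $L$ threshold as an aside inside the proof of the pooled-$L$ lemma, and obtains the $H$ threshold by the same mean-amplification substitution $m \mapsto (1+\nu)\mu$ that you use), so your Bank $L$ computation matches exactly.

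The obstacle you flag for Bank $H$ is real, and it is worth saying plainly that the paper does not resolve it either. As you note, with $\rho < 0$ the exact kinked payoff satisfies $\{\Pi_H \le \rho\} = \{\pi \le \rho\}$, since $\pi \ge 0$ forces $\Pi_H = (1+\nu)\pi \ge 0 > \rho$ and $\pi < 0$ gives $\Pi_H = \pi$; so under the literal payoff specification the two unilateral thresholds coincide and the wedge $\tilde\sigma^2_H > \tilde\sigma^2_L$ vanishes. The stated formula for $\tilde\sigma^2_H$ is therefore only obtainable by re-modeling $\Pi_H$ as $\N\bigl((1+\nu)\mu, \sigma^2\bigr)$, which is exactly what the paper's appendix does silently when it writes $\CDF\bigl((\rho_L - (1+\nu_t)(\mu_B+\mu_W))/(\hat\sigma_{Bt}+\sigma_W)\bigr)$ for the pooled case. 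Your proposal is correct conditional on stating that Gaussian approximation as an explicit modeling assumption, and your insistence on making it explicit is an improvement on the paper, not a defect of your argument; without it the lemma as stated is false for the exact payoff. The closing sanity checks (monotonicity in $\nu$, positivity of the right-hand side before squaring) are consistent with the paper's Corollary on ordered thresholds.
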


These represent the thresholds under which each bank is willing to lend unilaterally to a given group.

Each bank also has a pooled variance threshold, under which it is willing to lend to both groups. 

\begin{lem}[Pooled Variance Threshold for Bank $L$]
\label{lem:SigmaLThreshold}
Consider Bank $L$, which offers loans at a normalized interest rate of $1$, lending to group $B$ with a random payoff given by $\pi_i\sim \N(\mu_i,\sigma_i^2)$. 
Bank $L$ lends to both groups in period $t$ if and only if group B's variance satisfies:
$$\hat{\sigma}^2_{BLt} \leq  \tilde{\sigma}_L^{2, \textrm{pool}} \equiv \left(\frac{\rho_L - (\mu_W + \mu_B + \Phi^{-1}(\alpha)\sigma_W)}{\Phi^{-1}(\alpha)}\right)^2$$
\end{lem}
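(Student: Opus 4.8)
The plan is to compute Bank $L$'s profit distribution when it lends to both groups and then convert the $\VaR$ constraint into an explicit bound on $\sigma_B^2$. First I would note that, because Bank $L$ posts the strictly lower rate (normalized to $1 < 1+\nu$), every applicant it approves accepts its offer, so when both groups are approved Bank $L$'s realized profit is the pooled payoff $\Pi_{Lt} = \pi_{Wt} + \pi_{Bt}$. Since each $\pi_{it} \sim \N(\mu_i,\sigma_i^2)$, this sum is again Gaussian with mean $\mu_W + \mu_B$; its standard deviation is $\sigma_W + \sigma_B$ under the comonotone (perfectly positively correlated) treatment of group payoffs that the pooled threshold encodes, equivalently the conservative case in which the individual per-group $\VaR$ contributions add. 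I would state this dependence assumption explicitly, since it is exactly what distinguishes the claimed threshold from the diversified $\sqrt{\sigma_W^2 + \sigma_B^2}$ alternative.

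Next I would rewrite $\Prob(\Pi_{Lt} \le \rho_L) \le \alpha$ by standardizing the Gaussian, so that $\Prob(\Pi_{Lt} \le \rho_L) = \Phi\!\left(\tfrac{\rho_L - (\mu_W + \mu_B)}{\sigma_W + \sigma_B}\right)$. Because $\Phi$ is strictly increasing and invertible, the constraint is equivalent to $\tfrac{\rho_L - (\mu_W + \mu_B)}{\sigma_W + \sigma_B} \le \Phi^{-1}(\alpha)$. The remaining work is purely algebraic: clear the positive denominator, move the $\Phi^{-1}(\alpha)\sigma_W$ term so that it collects with $\mu_W + \mu_B$ inside the numerator, isolate $\sigma_B$, and square to recover $\tilde{\sigma}_L^{2,\mathrm{pool}}$.

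The step that requires genuine care, and which I expect to be the only real obstacle, is tracking the sign of $\Phi^{-1}(\alpha)$, which is strictly negative for the relevant tolerances (typically $\alpha \in \{0.01,0.05\}$). Multiplying the standardized inequality through by $\sigma_W + \sigma_B > 0$ preserves its direction, but dividing by $\Phi^{-1}(\alpha) < 0$ to solve for $\sigma_B$ reverses it exactly once, yielding $\sigma_B \le \tfrac{\rho_L - (\mu_W + \mu_B + \Phi^{-1}(\alpha)\sigma_W)}{\Phi^{-1}(\alpha)}$. Before squaring I would verify that this right-hand side is nonnegative — which holds because $\rho_L < 0 < \mu_W + \mu_B$ forces the numerator to be negative while the denominator is negative — so that squaring two nonnegative quantities is monotone and the inequality is preserved rather than flipped a second time. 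This sign bookkeeping, rather than any deep probabilistic argument, is where the proof could go wrong; once it is handled, reading the chain of equivalences in reverse gives the ``only if'' direction, and the strict monotonicity of $\Phi$ rules out spurious solutions from the squaring step, closing the biconditional $\hat{\sigma}^2_{BLt} \le \tilde{\sigma}_L^{2,\mathrm{pool}}$ claimed in the lemma.
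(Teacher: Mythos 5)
Your proposal is correct and follows essentially the same route as the paper's proof: standardize the pooled Gaussian profit $\pi_{Wt} + \pi_{Bt}$ with scale $\sigma_W + \hat{\sigma}_{BLt}$, write the $\VaR$ constraint as $\Phi\bigl(\tfrac{\rho_L - (\mu_W+\mu_B)}{\sigma_W + \hat{\sigma}_{BLt}}\bigr) \le \alpha$, invert $\Phi$, and rearrange, with exactly one inequality reversal from dividing by $\Phi^{-1}(\alpha) < 0$. You are in fact more explicit than the paper on the two delicate points --- the comonotone (additive standard deviation) aggregation of the two groups' risks, which the paper uses silently, and the sign and nonnegativity checks before squaring --- omitting only the paper's brief observations that lending to $B$ is profitable in expectation (since $\mu_B > 0$) and that $W$'s unilateral threshold is satisfied by assumption, which are needed to close the ``lends to both groups'' direction of the biconditional.
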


In other words, there is an upper bound $\tilde{\sigma}^L$ on the Bank's beliefs about the variance of group B, such that the bank is only willing to lend to group $B$ if it believes that group has lower variance than this threshold. Otherwise, its risk of a shortfall exceeds $\alpha\%$. 

A similar analysis for the subprime bank $H$ shows that it tolerates a higher level of variance. We have:

\begin{lem}[Pooled Variance Threshold for Bank $H$]
\label{lem:SigmaLThreshold}
Bank $H$ lends to group $B$ and group $W$ in period $t$ if and only if $\hat{\sigma}^2_{BHt} \leq  \tilde{\sigma}^H_{\text{pool}}$, where: 
$$\tilde{\sigma}_H^{2, \text{pool}} = \left(\frac{\rho_H - \left[(1+ \nu_t)(\mu_W + \mu_B ) + \Phi^{-1}(\alpha)\sigma_W\right]}{\Phi^{-1}(\alpha)}\right)^2
$$
\end{lem}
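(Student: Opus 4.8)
The plan is to follow the same template as the proof of the Bank $L$ pooled threshold (\Cref{lem:SigmaLThreshold}), adjusting only for the interest premium $\nu_t$ that enters Bank $H$'s payoff. First I would write out Bank $H$'s profit when it lends to both groups and both offers are accepted, so that $S_{iHt}=A_{iHt}=1$, giving
$$\Pi_{Ht} = \sum_{i\in\{W,B\}}\left[(1+\nu_t)\max\{\pi_{it},0\} + \min\{\pi_{it},0\}\right],$$
and then impose the VaR constraint $\Prob(\Pi_{Ht} < \rho_H) \le \alpha$, evaluated at the bank's current posterior point estimate $\hat{\sigma}_{BHt}$ in place of the unknown $\sigma_B$. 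Solving this inequality for the largest admissible value of $\hat{\sigma}^2_{BHt}$ yields the threshold, and monotonicity of the constraint in the variance delivers the ``if and only if''.

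The key modeling step is to identify the mean and standard deviation that govern the constraint. Because the premium $(1+\nu_t)$ scales only the positive part of each payoff while losses enter dollar-for-dollar through $\min\{\pi_{it},0\}$, the upside amplification raises the effective expected pooled return to $(1+\nu_t)(\mu_W+\mu_B)$, whereas the left tail that drives the VaR is governed by the unamplified dispersion. Under the same conservative (comonotone) aggregation used for Bank $L$, the relevant standard deviation is therefore $\sigma_W+\sigma_B$ rather than $(1+\nu_t)(\sigma_W+\sigma_B)$. This asymmetry is exactly what makes Bank $H$ more tolerant of variance than Bank $L$: its mean is inflated by the premium while its per-loan downside exposure is unchanged.

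Treating the pooled profit as Gaussian with this mean and standard deviation, the constraint becomes
$$\Prob(\Pi_{Ht} < \rho_H) = \Phi\!\left(\frac{\rho_H - (1+\nu_t)(\mu_W+\mu_B)}{\sigma_W+\sigma_B}\right) \le \alpha.$$
Applying the monotone $\Phi^{-1}$ gives $\tfrac{\rho_H - (1+\nu_t)(\mu_W+\mu_B)}{\sigma_W+\sigma_B} \le \Phi^{-1}(\alpha)$; I would then clear the positive denominator $\sigma_W+\sigma_B$, isolate $\sigma_B$, and divide through by $\Phi^{-1}(\alpha)$. Since $\alpha<1/2$ we have $\Phi^{-1}(\alpha)<0$, so this division reverses the inequality and yields $\sigma_B \le \big(\rho_H - [(1+\nu_t)(\mu_W+\mu_B) + \Phi^{-1}(\alpha)\sigma_W]\big)/\Phi^{-1}(\alpha)$. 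Squaring both sides, after checking that the right-hand side is nonnegative (equivalently, that the bracketed numerator is nonpositive in the binding regime), recovers exactly $\tilde{\sigma}_H^{2,\text{pool}}$.

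The main obstacle is the second step: the piecewise-linear payoff makes $\Pi_{Ht}$ genuinely non-Gaussian, so the Gaussian reduction needs care. The exact mean is $\sum_i\big(\mu_i + \nu_t\,\E[\max\{\pi_{it},0\}]\big)$, which collapses to $(1+\nu_t)(\mu_W+\mu_B)$ only when the default region is negligible ($\mu_i \gg \sigma_i$); I would state this as the operative regime. For the dispersion I must be explicit that, for a loss threshold $\rho_H<0$, only the lower tail matters, where each loan contributes $\pi_{it}$ directly, and that the additive aggregation $\sigma_W+\sigma_B$ is the same conservative worst-case convention adopted in \Cref{lem:SigmaLThreshold}, so the two thresholds are derived consistently. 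The remaining delicacy is purely mechanical: tracking the sign of $\Phi^{-1}(\alpha)$ through the division and the squaring so that the full equivalence, and not merely one implication, is preserved.
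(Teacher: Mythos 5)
Your proposal is correct and follows essentially the same route as the paper's proof: write the VaR constraint as $\Phi\bigl(\frac{\rho_H - (1+\nu_t)(\mu_W+\mu_B)}{\hat{\sigma}_{BHt}+\sigma_W}\bigr)\le\alpha$, apply $\Phi^{-1}$, rearrange using $\Phi^{-1}(\alpha)<0$, and square. You are in fact more careful than the paper, which silently treats the piecewise-linear payoff $(1+\nu_t)\max\{\pi_{it},0\}+\min\{\pi_{it},0\}$ as Gaussian with mean $(1+\nu_t)(\mu_W+\mu_B)$ and scale $\hat{\sigma}_{BHt}+\sigma_W$; your explicit statement of the regime in which that reduction is valid is a welcome addition rather than a deviation.
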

\begin{corollary}[Ordered variance thresholds]
While $\nu_t > 0$, and for $\alpha < .1$: 
    $$0 < \tilde{\sigma}^2_L < \tilde{\sigma}^{2, \text{pool}}_L < \tilde{\sigma}^2_H <  \tilde{\sigma}_H^{2, \text{pool}}$$
\end{corollary}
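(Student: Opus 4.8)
The plan is to collapse the chain of four inequalities into a comparison of four nonnegative reals and then settle each comparison by a one-line sign check. First I would substitute $\mu_W=\mu_B=\mu$ and abbreviate $z:=\Phi^{-1}(\alpha)$. Since $\alpha<0.1<\tfrac12$, the standard-normal quantile satisfies $z<0$ (indeed $z<\Phi^{-1}(0.1)<-1$), so $z^2>0$ is a common positive factor shared by all four thresholds. Each threshold is the square of an affine expression in $(\mu,\rho_L,\rho_H,\sigma_W,\nu)$ divided by $z$; writing $s_\bullet:=|N_\bullet/z|\ge 0$ for the corresponding numerator $N_\bullet$, the claimed ordering of the squared thresholds is equivalent to the ordering $0<s_L<s_L^{\mathrm{pool}}<s_H<s_H^{\mathrm{pool}}$ of their nonnegative square roots, provided I first check that each $N_\bullet$ has a definite sign so that $a^2<b^2$ may be read off from $|a|<|b|$. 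Because $z<0$ and $\rho_L-\mu<0$, dividing through by $z$ flips the inequality and each $s_\bullet$ emerges as a genuine nonnegative ceiling on a standard deviation; this $z<0$ bookkeeping is the only place the sign of the quantile is used.

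Positivity $0<\tilde\sigma^2_L$ is immediate, since $N=\rho_L-\mu\neq 0$ forces $s_L=(\mu-\rho_L)/|z|>0$. For the two within-bank comparisons I would compute the differences of square roots directly: from the pooled threshold for bank $L$ one gets $s_L^{\mathrm{pool}}-s_L=\mu/|z|-\sigma_W$, and the analogous computation for bank $H$ gives $s_H^{\mathrm{pool}}-s_H=(1+\nu)\mu/|z|-\sigma_W$ (plus a nonnegative correction $(\rho_L-\rho_H)/|z|$ if the two loss thresholds differ). Both differences are positive under the maintained assumption that the majority group is itself lendable, i.e. that $\sigma_W$ is small relative to $\mu/|z|$: annexing group $W$ to the pool adds expected return $\mu$ (a buffer of size $\mu/|z|$ in standard-deviation units) while adding only $\sigma_W$ of risk, so pooling \emph{relaxes} the variance ceiling for $B$ exactly when $\sigma_W<\mu/|z|$. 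The $H$-comparison is then automatic from the $L$-comparison because $\nu>0$ makes $(1+\nu)\mu/|z|>\mu/|z|>\sigma_W$.

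The main obstacle is the middle crossing $\tilde\sigma^{2,\mathrm{pool}}_L<\tilde\sigma^2_H$, which pits bank $L$'s most permissive ceiling against bank $H$'s least permissive one; unlike the others it is not sign-trivial, and it is precisely the inequality that carves out the regime in which only the subprime bank will lend. Here I would form $s_H-s_L^{\mathrm{pool}}=\big[(\nu-1)\mu+(\rho_L-\rho_H)\big]/|z|+\sigma_W$ and show it is positive. This is where the premium $\nu>0$ and the subprime bank's more permissive loss threshold $\rho_H$ do the real work: amplifying gains by $1+\nu$ raises bank $H$'s effective mean and hence its variance ceiling, and $\rho_H\le\rho_L$ widens the gap further, so that for $\nu>0$ the interval of $\sigma_W$ values consistent with both this inequality and the risk-pooling condition $\sigma_W<\mu/|z|$ is nonempty (one checks $\rho_H-\rho_L<\nu\mu$, which holds since $\rho_H\le\rho_L$ and $\nu\mu>0$). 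Assembling positivity, the two within-bank orderings, and this crossing yields the full chain $0<\tilde\sigma^2_L<\tilde\sigma^{2,\mathrm{pool}}_L<\tilde\sigma^2_H<\tilde\sigma_H^{2,\mathrm{pool}}$, with the only substantive content residing in the $\nu>0$ separation and the lendability-of-$W$ condition; the remainder is the routine $z<0$ sign bookkeeping flagged above.
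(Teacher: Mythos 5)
Your computational strategy is sound and considerably more careful than the paper's own justification (which is essentially ``by inspection, noting $\Phi^{-1}(\alpha)<0$''): reducing each squared threshold to the absolute value of its numerator over $|z|$ and comparing those square roots is the right move, and your formulas for the differences $s^{\mathrm{pool}}_L-s_L=\mu/|z|-\sigma_W$ and $s_H-s^{\mathrm{pool}}_L=\bigl[(\nu-1)\mu+(\rho_L-\rho_H)\bigr]/|z|+\sigma_W$ are correct. But this very computation exposes a gap in your conclusion. The hypotheses of the corollary are only $\nu_t>0$ and $\alpha<.1$; your second and third inequalities each require an additional, unstated restriction on $\sigma_W$, namely $\sigma_W<\mu/|z|$ for $\tilde\sigma^2_L<\tilde\sigma^{2,\mathrm{pool}}_L$ and $\sigma_W>\bigl[(1-\nu)\mu+(\rho_H-\rho_L)\bigr]/|z|$ for $\tilde\sigma^{2,\mathrm{pool}}_L<\tilde\sigma^2_H$. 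Neither follows from the assumption that group $W$ is unilaterally lendable, which only gives the weaker bound $\sigma_W\le(\mu-\rho_L)/|z|$. In the last paragraph you retreat to showing that the interval of admissible $\sigma_W$ is nonempty, and then assert the full chain; nonemptiness establishes that the ordering is \emph{consistent}, not that it \emph{holds}. Concretely, with $\mu=1$, $\rho_L=\rho_H=-1$, $\alpha=.05$ (so $|z|\approx1.645$), $\nu=0.1$, $\sigma_W=0.1$, one gets $\tilde\sigma^{2,\mathrm{pool}}_L\approx2.97>\tilde\sigma^2_H\approx1.63$, so the middle inequality fails even though all stated hypotheses are met.

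The honest conclusion of your analysis is therefore that the corollary as stated needs strengthened hypotheses (e.g.\ an explicit window for $\sigma_W$, or $\nu$ large enough that $(1-\nu)\mu+(\rho_H-\rho_L)\le 0$), and your writeup should say so rather than fold the missing condition into the phrase ``the maintained assumption that the majority group is itself lendable.'' If you instead restrict attention to the three inequalities that \emph{are} unconditional under the paper's sign conventions ($0<\tilde\sigma^2_L$, $\tilde\sigma^2_L<\tilde\sigma^2_H$ from $\nu>0$, and $\tilde\sigma^2_H<\tilde\sigma^{2,\mathrm{pool}}_H$ under $\sigma_W<(1+\nu)\mu/|z|$), state the two extra conditions on $\sigma_W$ explicitly, and then conclude the chain under those conditions, the argument becomes airtight --- and strictly more informative than the proof in the paper.
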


Intuitively, gains are scaled by the factor $(1+\nu_t)$, providing the high-interest rate bank  insulation against downside risk. Thus, when the perceived variance of group $i$ is between $\tilde{\sigma}^2_L$ and $\tilde{\sigma}^2_H$, the low-interest Bank $L$ will decline to lend, while the high-interest Bank $H$ can approve the loan.

\section{The Subprime Trap}
\label{sec:trap}

We now consider the multi-period setting in which lending decisions influence the evolution of banks' beliefs about borrower creditworthiness. 
\subsection{Assumptions}
We make several assumptions. These model stylized facts that characterize the subprime trap. First, we suppose that the variance of both groups is \textit{in fact} below the threshold $\tilde{\sigma}^2_L$, though the variances are not necessarily identical. 
\begin{asm}[Both groups are in fact creditworthy]
$$\sigma^2_W \leq \sigma^2_B < \tilde{\sigma}^2_L$$
\end{asm}
This is intended to describe a situation in which the low-interest rate bank would lend to both groups under perfect information. 

We characterize beliefs under imperfect information in which lending to group B does not occur. We suppose that Bank $L$ holds a prior that the variance of repayments for group $B$ is $\hat{\sigma}^2_{BL,0}$ such that:
\begin{asm}[L's prior variance for group B is above its risk threshold]
$\hat{\sigma}^2_{BL,0} > \tilde{\sigma}^{2, \text{pool}}_L$
\end{asm}
Where $\tilde{\sigma}^{2, \text{pool}}_L$ is the maximum variance that Bank $L$ (the main lender) can tolerate under its constraint $\VaR$. Based on its initial assessment, lending violates the risk limits of Bank $L$. 

Third, Bank $H$, which operates at a higher interest rate and has a correspondingly higher risk tolerance. The high-rate bank has a lending threshold $\tilde{\sigma}^2_H$ such that $\tilde{\sigma}^2_H > \tilde{\sigma}^{2,\text{pool}}_L$. Denoting bank H's prior by $\hat{\sigma}^2_{BH, 0}$, we suppose that: 

\begin{asm}[H's prior variance for group B is below its risk threshold]
$\hat{\sigma}^2_{BH, 0}\leq \tilde{\sigma}^2_H$
\end{asm}

Consequently, Bank $H$ is prepared to lend to group $B$. 
\subsection{Equilibrium}
In each period, banks choose simultaneously whether to lend to groups $W$ and $B$. Both types of bank approve group $W$'s loan application, and group $W$ chooses the lower-rate, mainstream bank. 
Based on their prior beliefs about B's variances, $H$ approves $B$'s loan application, but $L$ does not. If Bank $L$ does not lend to group $B$, then group $B$ is left with the subprime option from Bank $H$. But since Bank $L$'s belief update depends on observing a return from group $B$, if it does not lend to group $B$ at period $t$, it does not have an updated posterior belief to use as its prior at  period $t+1$: its beliefs do not change. 
The crucial observation is that if Bank $L$ persistently withholds loans to group $B$, it never observes the repayment data needed to update its inflated variance estimate, and its belief remains at $\hat{\sigma}_{BL, 0}^{2}$. 

We now formalize this result. 

\begin{thm}[Subprime Trap Equilibrium]
\label{thm:SubprimeTrap}
Suppose Assumptions 1-5 hold. Then, there exists a Bayesian subgame-perfect equilibrium in which, in every period, Bank $L$ lends exclusively to group $W$, and Bank $H$ lends to group $B$ with high probability, permanently relegating group $B$ to subprime loans.
\end{thm}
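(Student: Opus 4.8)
The plan is to exhibit an explicit belief-and-strategy profile and verify that it is optimal in every subgame, with beliefs evolving by the Belief-Updating Lemma; the entire argument then rests on one structural fact, namely that a bank updates its variance belief for a group \emph{only} when it actually lends to that group. I would take the banks' beliefs as the state variable and specify the candidate equilibrium as follows: in each period $t$, both banks approve $W$ and $W$ accepts $L$'s cheaper offer; $L$ rejects $B$; and $H$ charges the maximal admissible premium and approves $B$ whenever its current posterior satisfies $\hat\sigma^2_{BH,t}\le\tilde\sigma^2_H$. Beliefs are updated by Bayes' rule on realized repayments from accepted loans only.

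First I would check feasibility and optimality at $t=1$. By Assumption 4 and the ordering Corollary, $\hat\sigma^2_{BL,0}>\tilde\sigma^{2,\text{pool}}_L>\tilde\sigma^2_L$, so by the Unilateral- and Pooled-Threshold Lemmas every action in which $L$ lends to $B$ -- alone or pooled with $W$ -- violates $L$'s $\VaR$ constraint; since that constraint is lexically prior, lending to $B$ is infeasible and rejecting $B$ is forced, whereas serving $W$ is feasible (its complete-data prior lies below $\tilde\sigma^2_L$) and strictly profitable because $\mu>0$. For $H$, Assumption 5 gives $\hat\sigma^2_{BH,0}\le\tilde\sigma^2_H$, so lending to $B$ is feasible; because $L$ rejects $B$ irrespective of the premium, $H$ is a de facto monopolist over $B$ (it loses $W$ to the cheaper bank), and since positive returns are scaled by $1+\nu_t$ with $\mu>0$ its expected profit is strictly increasing in $\nu_t$, so the optimum is $\nu_t=\nu^{\text{max}}>0$, which establishes the claimed premium.

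Next I would propagate this forward by induction on $t$. The crux is that because $L$ never lends to $B$ it observes no $B$-repayments, so by the Belief-Updating Lemma its posterior is frozen: $\hat\sigma^2_{BL,t}=\hat\sigma^2_{BL,0}$ for every $t$ and every reachable history. The period-$1$ analysis therefore applies verbatim in every subgame, which is exactly subgame perfection on $L$'s side. I would emphasize here that there is no hidden exploration--exploitation trade-off to rule out: the hard, lexically-prior $\VaR$ constraint removes any option for $L$ to ``experiment'' by lending to $B$ at a temporary loss, and it is precisely this hardness that locks the trap in place.

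The main analytic obstacle is the ``with high probability'' qualifier for $H$, since $H$ does lend and its posterior genuinely evolves. Here I would use the strict gap supplied by Assumption 3, $\sigma^2_B<\tilde\sigma^2_L<\tilde\sigma^2_H$, and control the running sum $\sum_{m}(\pi_{Bm}-\mu_B)^2$ that drives $H$'s posterior: this is a scaled chi-square quantity with mean $\sigma^2_B$, so a tail bound together with a union bound over the at most $T$ periods shows that the stopping time $\tau=\min\{t:\hat\sigma^2_{BH,t}>\tilde\sigma^2_H\}$ satisfies $\Prob(\tau>T)\ge 1-\delta$. On this high-probability event $H$ approves $B$ in every period while $L$ never does, so $B$ is permanently confined to $H$'s subprime loans and the profile is a Bayesian subgame-perfect equilibrium of the stated form. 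I expect this concentration-plus-union-bound step to be the technically heaviest part; every other step reduces to invoking the threshold lemmas and the no-update property.
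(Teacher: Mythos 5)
Your proposal is correct and follows essentially the same route as the paper's proof: an induction showing Bank $L$'s belief about group $B$ is frozen at $\hat{\sigma}^2_{BL,0}>\tilde{\sigma}^{2,\text{pool}}_L$ because the lexically prior $\VaR$ constraint forever blocks the lending that would generate updates, combined with a feasibility check for Bank $H$ and a high-probability argument that $H$'s posterior stays below $\tilde{\sigma}^2_H$. Your two refinements---deriving $\nu_t=\nu^{\max}$ as $H$'s optimal monopoly premium over $B$, and replacing the paper's informal ``vanishing probability by the SLLN'' remark with a chi-square concentration bound plus a union bound over the finite horizon $T$---are strict improvements in rigor rather than a different approach.
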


\begin{proof}
We show by induction that the low-cost bank never lends to group $B$.

First, in the base case, the bank does not lend to group B. We have that $A_{ij1} = 0$, since, by Lemma 1, if the bank's prior belief is $\hat{\sigma}^2_{BL0}$ and $\hat{\sigma}^2_{BL0}>\tilde{\sigma}^{2, \text{pool}}_L$, so that the VaR constraint is violated, by Lemma 1, and hence lending does not occur in period 1. 

To show the induction step, we show that if the bank does not lend to B in period $t$, then it does not lend to B in period $t+1$. For this, we have that $A_{ijt} = 0 \implies \hat{\sigma}^2_{ijt+1} =   \hat{\sigma}^2_{ijt}$, because no updating occurs. but since $A_{ijt+1} = 1$ if and only if $\hat{\sigma}_{ijt+1} \leq \tilde{\sigma}^{2,\text{pool}}_L$, we have that $\hat{\sigma}^2_{Bt} > \tilde{\sigma}^{2, \text{pool}}_L$  and $\hat{\sigma}^2_{ijt+1} =   \hat{\sigma}_{ijt}^{2}  \implies \hat{\sigma}^2_{ijt+1} > \tilde{\sigma}^2_L$, so that $A_{ijt+1} = 0$. This verifies the induction step. 
 
This shows by induction that the low-cost bank never lends to $B$. Since the bank \textit{does} lend to $W$, we have that the optimal decision for Bank $L$ is to lend exclusively to group $W$ in every period.

Next, consider Bank $H$. By hypothesis, Bank $H$ believes that $\sigma^2_{BH0}\le\tilde{\sigma}^2_H$, and therefore lending to group $B$ satisfies its risk constraint in the initial period. Moreover, since the true mean $\mu_B$ is positive, lending to group $B$ is expected to yield a positive return. Bank $H$ lends to group $B$ with high probability, because in each stage there is a (vanishing) probability of observing a payoff realization that increases Bank $H$'s posterior belief about group B's variance above its threshold. This probability goes to zero by the SLLN, however. 

The borrowers act accordingly. Group $W$, faced with a lower interest rate from Bank $L$, accepts that offer, while group $B$, having no offer from Bank $L$, accepts the loan from Bank $H$.

Finally, note that because Bank $L$ never lends to group $B$, it never observes any repayment outcomes from that group. This implies that the variance estimate for group $B$ remains at $\hat{\sigma}^2_{BL0}>\tilde{\sigma}^{2, \text{pool}}_L$ in every period. As a consequence, there is no incentive for Bank $L$ to deviate from its strategy of lending only to group $W$, and similarly, neither Bank $H$ nor the borrower groups have incentives to deviate from their prescribed actions. Hence, the described strategy profile constitutes a subgame-perfect equilibrium.
\end{proof}

\begin{remark}[Purely Informational Failure]
Even though the true variance of group $B$ is such that $\sigma^2_B < \tilde{\sigma}^{2, \text{pool}}_L$, Bank $L$ remains unaware of this fact unless it extends credit to group $B$. The absence of new data prevents Bank $L$ from updating its risk assessment, leading to an inefficient outcome in which group $B$ continues to access only subprime credit.
\end{remark}

\begin{remark}[Equal Creditworthiness]
It is noteworthy that the equilibrium outcome emerges solely from differences in prior variances. Since $\mu_B = \mu_W = \mu$, the two groups are identical in terms of average creditworthiness; the discrepancy stems entirely from the combination of incorrect beliefs about borrower variance, and the risk assessment constraints imposed by $\VaR$.
\end{remark}

\section{Escaping the Subprime Trap via Subsidies}
\label{sec:subsidies}

We now show that a subsidy or partial lending guarantee can help group $B$ escape the subprime trap. The key idea is to cover sufficient downside risk so that Bank $L$ is induced to lend to group $B$. When this occurs, Bank $L$ can gather information on group $B$'s repayment performance. Since, by Assumption 1, group $B$, is in fact creditworthy, Bank $L$'s beliefs will eventually converge to group $B$'s actual variance, which, by Assumption 2, is low enough that $L$'s VaR constraint will be satisfied. 
Once enough observations have been accumulated, Bank $L$ will update its risk assessment so that the estimated variance satisfies $\hat{\sigma}^2_{BLt}<\tilde{\sigma}^2_L$, eliminating the need for further subsidy.

\begin{lem}[Learning through Lending]
 ${{\hat{\sigma}^2_{BLm}}}\to\sigma^2_B$ almost surely as $m\to\infty$
\end{lem}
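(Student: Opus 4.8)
The plan is to reduce the claim to a direct application of the strong law of large numbers (SLLN) to the i.i.d.\ sequence of squared repayment deviations, combined with the explicit posterior-mean formula already established in the belief-updating lemma. The starting point is that expression: after observing $m$ repayments from group $B$, Bank $L$'s point estimate of the variance is the posterior mean
$$\hat{\sigma}^2_{BLm} = \frac{b_0 + \frac{np_B}{2} + \frac{1}{2}\sum_{k=1}^m (\pi_{Bk}-\mu_B)^2}{a_0 + \frac{np_B}{2} - 1 + \frac{m}{2}}.$$
Here the hyperparameters $a_0, b_0$ and the effective prior sample size $np_B$ are fixed constants that do not grow with $m$.

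The first substantive step is to observe that, conditional on Bank $L$ lending to group $B$ (exactly the subsidized regime under consideration), the observed returns $\pi_{B1}, \pi_{B2}, \dots$ are i.i.d.\ draws from $\N(\mu_B, \sigma_B^2)$, so the squared deviations $(\pi_{Bk}-\mu_B)^2$ are i.i.d.\ with mean $\E[(\pi_{Bk}-\mu_B)^2] = \Var(\pi_{Bk}) = \sigma_B^2$. Crucially, $\mu_B$ is the \emph{true, known} mean (Assumption 1), so these deviations are centered correctly and no mean-estimation error enters. Since each squared deviation is $\sigma_B^2$ times a $\chi^2_1$ variable, it has finite mean $\sigma_B^2$, so Kolmogorov's SLLN applies and
$$\frac{1}{m}\sum_{k=1}^m (\pi_{Bk}-\mu_B)^2 \;\xrightarrow{\text{a.s.}}\; \sigma_B^2.$$

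The second step is to extract this average from the posterior mean by dividing numerator and denominator by $m/2$, giving
$$\hat{\sigma}^2_{BLm} = \frac{\frac{2}{m}\left(b_0 + \frac{np_B}{2}\right) + \frac{1}{m}\sum_{k=1}^m (\pi_{Bk}-\mu_B)^2}{\frac{2}{m}\left(a_0 + \frac{np_B}{2} - 1\right) + 1}.$$
The two prior-dependent terms vanish as $m \to \infty$ because their numerators are constant, while the running average converges almost surely to $\sigma_B^2$ by the previous step. Applying the algebra of almost-sure limits (the limit of a ratio equals the ratio of limits when the denominator limit is nonzero, here $1$) yields $\hat{\sigma}^2_{BLm} \to \sigma_B^2$ almost surely.

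I do not expect a genuine obstacle; the result is a standard Bayesian-consistency fact. The only points requiring care are (i) confirming that the observed returns genuinely form an i.i.d.\ sequence conditional on the lending event, so that the SLLN (rather than merely a weak-law statement) applies, and (ii) noting that the fixed prior hyperparameters wash out asymptotically, so that the inflated prior which created the trap has no effect on the limit — it affects only the transient rate of convergence, and hence how long the subsidy must be maintained before the $\VaR$ constraint is met unaided.
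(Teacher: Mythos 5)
Your proof is correct and follows the same route as the paper, whose entire argument for this lemma is the single sentence ``This follows by application of the Law of Large Numbers.'' You have simply filled in the details the paper omits — the explicit posterior-mean formula from the belief-updating lemma, the SLLN applied to the i.i.d.\ squared deviations $(\pi_{Bk}-\mu_B)^2$ (valid because $\mu_B$ is known), and the observation that the fixed prior hyperparameters wash out in the limit — all of which are accurate.
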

By Lemma 3, this is the condition required for Bank L to lend to group $B$, so that the $\VaR$ constraint is satisfied. In other words, if the mainstream bank \textit{were} to lend to group $B$ for a sufficiently long period of time, it would learn about the variance of returns due to lending to group $B$. The problem is then how to induce the bank to do so. 

To encourage Bank $L$ to lend to group $B$, we introduce a subsidy mechanism. 

First, we define our subsidy as the smallest side-payment that would allow the bank to satisfy its VaR constraint in period $t$. We have:

$${s_{t}^*} = \inf\left\{ s \geq 0 \: : \: \Prob\left(\Pi_{jt} + s < \rho\right) \leq \alpha\right\}$$

This allows us to solve for the optimal subsidy.
\begin{lem}
In each period $t$, the required subsidy is equal to:
$${s_t^*} =  \max\Bigl\{0,\;\rho-(\mu_W + \mu_B)-\Phi^{-1}(\alpha)\,(\hat{\sigma}_{BLt} + \sigma_W)\Bigr\}$$
\end{lem}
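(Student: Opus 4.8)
The plan is to reduce the defining infimum to a constraint on a single real parameter and then read off the answer directly. First I would specialize the generic profit $\Pi_{jt}$ appearing in the definition of $s_t^*$ to the decision the subsidy is meant to unlock: the side-payment exists precisely to induce Bank $L$ to lend to \emph{both} groups, so the operative profit is the pooled profit $\Pi_{Lt} = \pi_{Wt} + \pi_{Bt}$. Under Bank $L$'s current beliefs, and using the same convention under which the pooled threshold of \Cref{lem:SigmaLThreshold} was derived (namely that the relevant standard deviation of the two-loan portfolio is $\sigma_W + \hat{\sigma}_{BLt}$), $\Pi_{Lt}$ is Gaussian with mean $\mu_W + \mu_B$ and standard deviation $\sigma_W + \hat{\sigma}_{BLt}$.

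Next I would use that a subsidy is a deterministic side-payment, so it shifts the profit distribution without rescaling it: $\Prob(\Pi_{Lt} + s < \rho) = \Prob(\Pi_{Lt} < \rho - s)$. Standardizing by the (strictly positive) standard deviation and using that $\CDF$ is continuous and strictly increasing, the subsidized constraint $\Prob(\Pi_{Lt}+s<\rho)\le\alpha$ is \emph{equivalent} to the scalar inequality
$$\frac{(\rho - s) - (\mu_W + \mu_B)}{\sigma_W + \hat{\sigma}_{BLt}} \le \CDFinv(\alpha).$$

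Then I would solve for $s$. Multiplying through by the positive quantity $\sigma_W + \hat{\sigma}_{BLt}$ and isolating $s$ gives
$$s \ge \rho - (\mu_W + \mu_B) - \CDFinv(\alpha)\,(\sigma_W + \hat{\sigma}_{BLt}).$$
Since $\Prob(\Pi_{Lt} < \rho - s)$ is continuous and non-increasing in $s$, the feasible set is the intersection of this closed half-line with $[0,\infty)$; its infimum is therefore attained and equals the larger of $0$ and the right-hand side, which is exactly the claimed expression for $s_t^*$.

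The step that needs the most care is the sign bookkeeping around $\CDFinv(\alpha)$: because $\alpha$ is small, $\CDFinv(\alpha)<0$, so I must rearrange by multiplying through by the positive standard deviation — never by the negative quantile — to avoid flipping the inequality at the wrong point, after which the quantile enters with its correct sign and the outer $\max\{0,\cdot\}$ simply absorbs the feasibility requirement $s \ge 0$. The one modeling subtlety worth flagging explicitly is that the Gaussian-quantile step reuses the comonotonic standard-deviation convention of the earlier pooled-threshold lemmas, which keeps the present formula consistent with \Cref{lem:SigmaLThreshold}; as a sanity check, the expression yields $s_t^* = 0$ if and only if $\hat{\sigma}^2_{BLt} \le \tilde{\sigma}^{2,\text{pool}}_L$, i.e.\ precisely when Bank $L$ already satisfies its pooled $\VaR$ constraint and no subsidy is needed.
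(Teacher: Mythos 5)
Your proposal is correct and follows essentially the same route as the paper's proof: model the pooled profit as Gaussian with mean $\mu_W+\mu_B$ and standard deviation $\sigma_W+\hat{\sigma}_{BLt}$, observe that the subsidy is a deterministic shift, invert $\Phi$ to get a linear inequality in $s$, and take the infimum over the feasible set to obtain the $\max\{0,\cdot\}$ formula. Your added care about the sign of $\Phi^{-1}(\alpha)$ and the consistency check against the pooled threshold are refinements of, not departures from, the paper's argument.
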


If $s_t^*\le0$, no subsidies are needed; otherwise, the regulator covers any shortfall up to $s_t^*$, ensuring that Bank $L$ meets its $\VaR$ restriction.

We formalize the adaptive subsidy mechanism in Algorithm~\ref{alg:AdaptiveSubsidy}.

\begin{algorithm}[H]
\caption{Adaptive Subsidy to Escape the Subprime Trap}
\label{alg:AdaptiveSubsidy}
\begin{algorithmic}[1]
\Require Parameters: $\rho$, $\alpha$, $\mu$, prior parameters $(a_0, b_0)$, $np_B$, threshold $\tilde{\sigma}^2_L$, horizon $T$
\State \textbf{Initialize:}
\State $a_t \gets a_0 + \frac{np_B}{2}$ \Comment{Shape with historical data}
\State $b_t \gets b_0 + \frac{np_B}{2}$ \Comment{Scale with historical data}
\State $M \gets 0$ \Comment{Count of new loans to group B}
\For{$t \in 1,\cdots, T$}
    \State $\hat{\sigma}^2_{BLt} \gets \frac{b_t}{a_t - 1}$ \Comment{Current estimate}
    \If{$\hat{\sigma}^2_{BLt} > \tilde{\sigma}^*_L$}
        \State $s_t \gets s^*_t(\hat{\sigma}^2_{BLt})$
        \State Bank L lends to B with subsidy $s_t$
        \State Observe return $\pi_{B,t}$
        \State $M \gets M + 1$
        \State $a_t \gets a_0 + \frac{np_B + M}{2}$
        \State $b_t \gets b_0 + \frac{np_B + \sum_{m=1}^{M}(\pi_{B,m} - \mu)^2}{2}$
    \Else
        \State No subsidy needed. 
    \EndIf
\EndFor
\end{algorithmic}
\end{algorithm}
The subsidy incentivizes Bank $L$ to lend to group $B$, generating observations that allow its risk assessment to converge to the true variance $\sigma_B^2$. Once the updated estimate satisfies $\hat{\sigma}^2_{BLt}<\tilde{\sigma}^2_L$, it follows from Lemma 2 that Bank $L$ meets its $\VaR$ constraint without subsidy (i.e., ${s_t^*}$), and no further external support is required.

We now state the main result regarding the subsidy mechanism.

\begin{thm}[Escaping the Subprime Trap]
\label{thm:SubsidyWorks}
Assume that the true distribution of repayments for group $B$ is $\N(\mu,\sigma_B^2)$ with $\sigma^2_B<\tilde{\sigma}^2_L$, but that Bank $L$ initially believes that the variance is $\hat{\sigma}^2_{BL0}>\tilde{\sigma}^2_L$. Under the subsidy mechanism described above (Algorithm~\ref{alg:AdaptiveSubsidy}), with probability one there exists a finite time $\tau$ such that for all $t\ge\tau$, Bank $L$ updates its variance estimate so that $\hat{\sigma}^2_{BLt>\tau} \to \sigma^2_B <\tilde{\sigma}^L$ and consequently ${s_t^*} \to 0$. Beyond time $\tau$, Bank $L$ continues lending to group $B$ without further subsidy.
\end{thm}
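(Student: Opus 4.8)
The plan is to chain the subsidy characterization (the preceding lemma giving $s_t^*$) together with the almost-sure convergence in the \emph{Learning through Lending} lemma, and then extract an almost-surely finite stopping time from the \emph{strict} gap $\sigma_B^2<\tilde{\sigma}^2_L$. The first and most important step is to verify that under Algorithm~\ref{alg:AdaptiveSubsidy}, Bank $L$ lends to group $B$ in \emph{every} period, so that a fresh i.i.d.\ draw $\pi_{B,t}\sim\N(\mu,\sigma_B^2)$ is observed and the count $M_t$ increases by one each period. In any period with $\hat{\sigma}^2_{BLt}>\tilde{\sigma}^2_L$, this holds because $s_t^*$ is by definition the smallest side-payment for which $\Prob(\Pi_{Lt}+s<\rho)\le\alpha$; supplying exactly $s_t^*$ makes the (believed) VaR constraint bind rather than fail, so lending is feasible, and since $\mu_B>0$ it is strictly profitable in expectation. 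In any period with $\hat{\sigma}^2_{BLt}\le\tilde{\sigma}^2_L$, the VaR constraint already holds by the pooled-variance threshold for Bank $L$, so Bank $L$ lends unsubsidized and still observes the return. Either way lending and observation occur, hence $M_t\to\infty$.

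Next I would invoke the \emph{Learning through Lending} lemma: because $M_t\to\infty$ and the realized returns are i.i.d.\ draws from the true $\N(\mu,\sigma_B^2)$, the Inverse-Gamma posterior mean satisfies $\hat{\sigma}^2_{BLt}\to\sigma_B^2$ almost surely. Concretely, writing that posterior mean as a ratio whose numerator is dominated by $\tfrac12\sum_{m}(\pi_{Bm}-\mu)^2$ and whose denominator is dominated by $\tfrac{M}{2}$, the constant prior terms $a_0,b_0,np_B$ wash out and the SLLN applied to $\tfrac1M\sum_{m}(\pi_{Bm}-\mu)^2$ forces the ratio to $\sigma_B^2$.

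The key remaining step is to convert almost-sure convergence into an almost-surely finite stopping time, and this is exactly where the strict inequality $\sigma_B^2<\tilde{\sigma}^2_L$ is essential. On the probability-one event where $\hat{\sigma}^2_{BLt}\to\sigma_B^2$, fix any level $c$ with $\sigma_B^2<c<\tilde{\sigma}^2_L$; convergence yields a path-dependent $\tau(\omega)<\infty$ such that $\hat{\sigma}^2_{BLt}<c<\tilde{\sigma}^2_L$ for all $t\ge\tau(\omega)$. For such $t$ the VaR constraint holds without assistance, so the subsidy formula returns $s_t^*=\max\{0,\cdot\}=0$; hence $s_t^*\to0$ and, by the first step, Bank $L$ continues lending unsubsidized thereafter. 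I would close by noting that only the finitely many periods $t<\tau$ can carry a positive (and uniformly bounded) side-payment, so the cumulative subsidy is finite.

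I expect the main obstacle to be the first step rather than the limit theorem: one must rule out the degenerate possibility that learning halts prematurely — for instance that the ``no subsidy needed'' branch freezes the observation count while $\hat{\sigma}^2_{BLt}$ merely grazes the threshold — by arguing explicitly that Bank $L$ keeps lending and recording returns whenever its (subsidized or unsubsidized) VaR constraint is met, so that $M_t$ is genuinely unbounded; without this, convergence to $\sigma_B^2$ could fail even though the estimate happened to sit below the threshold. A secondary subtlety worth flagging is that $s_t^*$ is computed under the bank's \emph{believed} variance $\hat{\sigma}_{BLt}$ while the observations are drawn from the \emph{true} variance $\sigma_B^2$; this is consistent, since the lending decision is belief-based whereas Bayesian updating feeds on realized returns, and it is precisely the convergence of beliefs to the truth that eventually renders the believed constraint slack.
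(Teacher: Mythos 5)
Your proposal is correct and follows essentially the same three-step route as the paper's own proof: the subsidy makes the believed VaR constraint hold so lending (and hence observation) occurs each period, the SLLN drives the Inverse-Gamma posterior mean to $\sigma_B^2$ almost surely, and the strict gap $\sigma_B^2 < \tilde{\sigma}^{2,\text{pool}}_L$ yields an almost-surely finite $\tau$ after which $s_t^* = 0$. The only place you go beyond the paper is in explicitly checking that the ``no subsidy needed'' branch of Algorithm~\ref{alg:AdaptiveSubsidy} still generates loans and observations so that $M_t \to \infty$ --- a point the paper's proof (and the pseudocode itself) leaves implicit.
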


In short, the subsidy induces the bank to update their information about minority group creditworthiness. It is also temporary when there exist otherwise creditworthy borrowers trapped in subprime loans: subsidizing information discovery helps banks to learn the true variance of lender groups. Further, while algorithmic audits could correct bias, subsidies can avoid privacy hurdles and scale faster \cite{Cowgill2019}.

\begin{corollary}[Long-term benefits of subsidy]
For $t > \tau,\: \nu_t = 0$.
\end{corollary}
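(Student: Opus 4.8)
The plan is to derive the corollary from a Bertrand-style price-competition argument: for $t > \tau$, Bank $L$ re-enters the market for group $B$, so Bank $H$ loses the monopoly power that previously let it charge a positive premium, and its unique profit-maximizing choice collapses to $\nu_t = 0$. The economic content is that any positive premium would divert every group-$B$ borrower to the now-available cheaper lender, leaving Bank $H$ with zero business from that group, so matching the mainstream rate is strictly better.

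Concretely, I would proceed in four steps. First, I would invoke \Cref{thm:SubsidyWorks}: with probability one there is a finite $\tau$ such that for all $t > \tau$ we have $\hat{\sigma}^2_{BLt} \to \sigma^2_B < \tilde{\sigma}^2_L$, so by the unilateral lending threshold lemma Bank $L$ satisfies its $\VaR$ constraint on loans to group $B$ and lends without subsidy. Second, I would verify that Bank $H$ remains \emph{feasible} at a zero premium: setting $\nu_t = 0$ removes the downside insulation supplied by the factor $(1+\nu_t)$, collapsing $H$'s unilateral threshold to $\tilde{\sigma}^2_H\big|_{\nu=0} = \big(\tfrac{\rho_L - \mu}{\Phi^{-1}(\alpha)}\big)^2 = \tilde{\sigma}^2_L$; since $H$ has lent to group $B$ throughout, its belief has likewise converged below this threshold, so $H$ can still serve group $B$ at $\nu_t = 0$. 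Third, holding $L$'s rate fixed at $1$ and using that group $B$ strictly prefers the cheaper loan, I would set up $H$'s best-response problem: for any $\nu_t > 0$ every group-$B$ applicant takes $L$'s offer and $H$ earns zero from the group, whereas at $\nu_t = 0$ applicants randomize (by the tie-breaking assumption) and $H$ earns $\tfrac{1}{2}\mu > 0$ in expectation, using $\mu > 0$ from Assumption 1. Fourth, I would conclude that $\nu_t = 0$ strictly dominates every $\nu_t > 0$ in $H$'s expected profit from group $B$ (the identical comparison holds for group $W$, which also selects the cheaper lender), so $\nu_t = 0$ is Bank $H$'s unique optimum for all $t > \tau$.

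The main obstacle is the feasibility check in the second step: the premium in this model does double duty, serving both as a price and as a buffer against losses, so I must confirm that stripping it to zero does not itself violate Bank $H$'s $\VaR$ constraint. This reduces to checking $\sigma^2_B < \tilde{\sigma}^2_H\big|_{\nu=0}$, which holds precisely because $\tilde{\sigma}^2_H\big|_{\nu=0} = \tilde{\sigma}^2_L$ (a limiting case of the ordered-thresholds corollary) and $\sigma^2_B < \tilde{\sigma}^2_L$ by Assumption 3. A secondary subtlety is that the exact equality $\nu_t = 0$, rather than merely $\nu_t \to 0^{+}$, relies on the randomization tie-breaking rule, which gives $H$ strictly positive expected profit at $\nu_t = 0$ and thereby pins the optimum at the boundary; I would state this dependence explicitly.
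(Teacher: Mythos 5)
Your proposal is correct and follows essentially the same route as the paper, whose justification for this corollary is the informal Bertrand-competition paragraph immediately following the statement: any positive premium sends every borrower to Bank $L$, while a zero premium yields randomization under the tie-breaking rule and hence strictly positive expected profit from $\mu > 0$. Your additional feasibility check that Bank $H$'s $\VaR$ constraint survives at $\nu_t = 0$ (via $\tilde{\sigma}^2_H\big|_{\nu=0} = \tilde{\sigma}^2_L$) is a worthwhile tightening the paper omits entirely, though note that your claim that $H$'s posterior has already fallen below $\tilde{\sigma}^2_L$ by time $\tau$ is not automatic, since $\tau$ is defined by Bank $L$'s learning process and Assumption 5 only places $H$'s prior below the weaker threshold $\tilde{\sigma}^2_H$, so a separate finite-time convergence argument for $H$'s estimate would be needed to pin that down.
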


In any round in which the low-interest bank would approve group $B$'s loan, the high-interest bank will choose to set its premium to zero. If it chooses any positive premium, neither group will accept the loan. If it chooses a premium of zero, both groups will randomize over accepting the loan from bank $B$ and accepting the loan from bank $H$. Since this persists for all $t > \tau$, this reduces interest rates permanently: competition reduces the premiums charged by subprime lenders. 

\section{Conclusion}

Inequalities in algorithmic lending practices can persist even when minority applicants are as creditworthy as white applicants. We study one explanation for this phenomenon: the role of risk-management constraints, specifically $\VaR$, in contributing to persistent disparities in lending. We have shown that risk management constraints can lead lenders to refuse loans even with positive Net Present Value. This forces minority applicants to accept high-interest rate loans, with negative implications for their financial prospects, a situation we describe as the ``subprime trap" equilibrium. We emphasize the role of inaccurate prior beliefs about the risk of lending to minority groups, which leads to systematic exclusion and higher borrowing costs. 

Our theoretical results align with documented patterns in mortgage lending markets.  First, the persistence of our subprime trap equilibrium mirrors the decade-long stability of racial lending disparities documented in mortgage markets, even as algorithmic lending has expanded \citep{Bartlett2022}. Second, our model's prediction that mainstream banks never update their beliefs about minority creditworthiness is consistent with \citep{Bayer2016}. Third, the equilibrium's existence despite equal creditworthiness  directly explains the puzzle identified by \citep{Popick2022} and \citep{Crosignani2023}: why minorities with comparable credit scores remain concentrated at high-cost lenders.

Implementation of our subsidy mechanism maps naturally onto existing policy infrastructure. The Federal Housing Administration's mortgage insurance programs already function as partial guarantees, covering lender losses when borrowers default on FHA-insured loans \cite{hud2024sfh}. Similarly, the Community Reinvestment Act provides regulatory credits to banks for community development loans and investments in low- and moderate-income neighborhoods \cite{cra1977, fedreserve2024cra}. Our analysis suggests these programs could achieve greater impact if restructured to explicitly target variance reduction during a learning period; guaranteeing tail risks for a finite horizon would enable banks to update their risk assessments while avoiding solvency-affecting losses.

Our analysis focuses on one specific mechanism: variance-based discrimination through VaR constraints. Observed lending disparities likely result from multiple interacting factors including historical discrimination, geographic segregation, and differential access to financial services.

Nonetheless, targeted, temporary interventions, such as subsidies or guarantees, can break this cycle by allowing banks to learn true risk profiles. These findings suggest practical avenues for addressing lending disparities and offer a framework to explain failures of fairness in algorithmic lending decisions. 

\bibliography{refs_lending}

\appendix
\section{Appendix}
\subsection{Proof of Lemma 1}
\begin{proof}
We derive the posterior distribution for $\sigma^2_B$ after observing $M$ loan outcomes.

\noindent \textbf{Initial beliefs based on historical credit data.}

\noindent Before observing any data, the bank starts with an uninformative prior:
$$\sigma^2_B \sim \text{Inv-}\Gamma(a_0, b_0)$$

The bank observes $np_B$ historical loan outcomes from group B applicants with complete files. 

The likelihood of observing the historical data given $\sigma^2_B$ is:
$$p(C_B | \sigma^2_B) \propto (\sigma^2_B)^{-np_B/2} \exp\left(-\frac{np_B}{2\sigma^2_B}\right)$$

By Bayes' theorem:
$$p(\sigma^2_B | C_B) \propto p(C_B | \sigma^2_B) \cdot p(\sigma^2_B)$$

$$\propto (\sigma^2_B)^{-np_B/2} \exp\left(-\frac{np_B}{2\sigma^2_B}\right) \cdot (\sigma^2_B)^{-a_0-1} \exp\left(-\frac{b_0}{\sigma^2_B}\right)$$

$$= (\sigma^2_B)^{-(a_0 + \frac{np_B}{2})-1} \exp\left(-\frac{b_0 + \frac{np_B}{2}}{\sigma^2_B}\right)$$

Therefore, the initial belief is:
$$\sigma^2_{B,0} \sim \text{Inv-}\Gamma\left(a_0 + \frac{np_B}{2}, b_0 + \frac{np_B}{2}\right)$$

\noindent\textbf{Learning through lending}.

\noindent Given the true variance $\sigma^2_B$ and known mean $\mu_B$, the observed returns $\{\pi_{B,1}, \ldots, \pi_{B,M}\}$ are independently distributed:
$$\pi_{B,m} | \sigma^2_B \sim \mathcal{N}(\mu_B, \sigma^2_B) \quad \text{for } m = 1, \ldots, M$$

The prior is:
$$p(\sigma^2_B) \propto (\sigma^2_B)^{-{a_0 + \frac{np_B}{2}}-1} \exp\left(-\frac{b_0 + \frac{np_B}{2}}{\sigma^2_B}\right)$$

The likelihood for $M$ observations with known mean $\mu_B$ is:
$$p(\pi_{B,1}, \ldots, \pi_{B,M} | \sigma^2_B) = \prod_{m=1}^M \frac{1}{\sqrt{2\pi\sigma^2_B}} \exp\left(-\frac{(\pi_{B,m} - \mu_B)^2}{2\sigma^2_B}\right)$$

$$= (2\pi\sigma^2_B)^{-M/2} \exp\left(-\frac{\sum_{m=1}^M(\pi_{B,m} - \mu_B)^2}{2\sigma^2_B}\right)$$

$$\propto (\sigma^2_B)^{-M/2} \exp\left(-\frac{\sum_{m=1}^M(\pi_{B,m} - \mu_B)^2}{2\sigma^2_B}\right)$$

By Bayes' theorem:
$$p(\sigma^2_B | \text{data}) \propto p(\text{data} | \sigma^2_B) \cdot p(\sigma^2_B)$$

$$\propto (\sigma^2_B)^{-M/2} \exp\left(-\frac{\sum_{m=1}^M(\pi_{B,m} - \mu_B)^2}{2\sigma^2_B}\right) \cdot (\sigma^2_B)^{-a_0-\frac{np_B}{2}-1} \exp\left(-\frac{b_0 + \frac{np_B}{2}}{\sigma^2_B}\right)$$

$$= (\sigma^2_B)^{-(a_0+\frac{np_B+M}{2})-1} \exp\left(-\frac{b_0 + \frac{np_B + \sum_{m=1}^M(\pi_{B,m} - \mu_B)^2}{2}}{\sigma^2_B}\right)$$

This is the kernel of an Inverse-Gamma distribution with parameters:
$$a_{post} = a_0 + \frac{np_B + M}{2}, \quad b_{post} = b_0 + \frac{np_B + \sum_{m=1}^M(\pi_{B,m} - \mu_B)^2}{2}$$

so that: 
$$\sigma^2_B | \pi_{B,1}, \ldots, \pi_{B,M} \sim \text{Inv-}\Gamma\left(a_0 + \frac{np_B + M}{2}, b_0 + \frac{np_B + \sum_{m=1}^M(\pi_{B,m} - \mu_B)^2}{2}\right)$$
The mean of an Inverse-Gamma$(a,b)$ distribution is $\frac{b}{a-1}$. Therefore:
$$\mathbb{E}[\sigma^2_B | \pi_{B,1}, \ldots, \pi_{B,M}] = \frac{b_0 + \frac{np_B + \sum_{m=1}^M(\pi_{B,m} - \mu_B)^2}{2}}{a_0 + \frac{np_B + M}{2} - 1}$$

Which is the formula for the bank's posterior estimate of the variance of lending to group $B$, given that it has observed M loans in addition to prior group-level credit history data. 
\end{proof}
\subsection{Proof of Lemma 2}

\begin{proof}
First, we suppose that Bank $L$ lends to group $W$, and derive a threshold rule for the prior variance $\sigma^2_B$ holding $\sigma^2_W$ fixed such that this rule determines whether or not Bank $L$ will lend to group $B$.

First, we note that it is profitable in expectation for Bank $L$ to lend to group $B$, because, by Assumption 1, $\mathbb{E}[\pi_B] = \mu_B > 0$. 

We then need to check Bank $L$'s $VaR$ constraint. It is:

$$Pr\left(\Pi_{jt} < \rho \;|\; A_{BLt} = 1   \right) = \Phi\left(\dfrac{\rho_L - (\mu_B + \mu_W)}{\hat{\sigma}_{Bt} + \sigma_W}\right)$$

Where $\Phi(\cdot)$ is the Cumulative Distribution Function of the standard normal distribution.
This entails that 
$$Pr\left(\Pi_{jt} \, |\, A_{BL} = 1  < \rho \right) \leq \alpha \: \Longleftrightarrow \Phi\left(\dfrac{\rho_L - (\mu_B + \mu_W)}{\hat{\sigma}_{Bt} + \sigma_W}\right) \leq \alpha$$

Rearranging, we get the condition that:

$$\hat{\sigma}^2_{Bt} \leq \left(\dfrac{\rho_L - (\mu_W + \mu_B + \Phi^{-1}(\alpha)\sigma_W)}{\Phi^{-1}(\alpha)}\right)^2$$ 

Where the right-hand side defines a variance threshold $\tilde{\sigma}^2_L$ such that, if group B's prior variance is above this threshold, Bank $L$ will not lend to group $B$:

$$\hat{\sigma}^2_{BLt} \leq \tilde{\sigma}^{2, \text{pool}}_L \equiv  \left(\dfrac{\rho_L - (\mu_W + \mu_B + \Phi^{-1}(\alpha)\sigma_W)}{\Phi^{-1}(\alpha)}\right)^2$$ 

Next, we show that Bank $L$ always lends to group $W$. Recall that, by Assumption 2, Group $W$'s variance is common knowledge: this reflects the assumption that banks have higher quality information about the creditworthiness of the majority group.

Checking the VaR constraint in the unilateral case, we generate the threshold:

$$\sigma^2_W \leq \left(\dfrac{\rho_L - \mu_W}{\Phi^{-1}(\alpha)}\right)^2$$

Since we know that, by Assumption 3, this condition is satisfied, Bank $L$ is willing to lend to $W$ unilaterally. Assumption 3 also tells us that $\hat{\sigma}^2_{BL0}$ is above this threshold, indicating that Bank $L$ will not lend to $B$ unilaterally. 

Hence, Bank $L$ will lend to group $W$ unilaterally, and will lend to group $B$ in addition to group $W$ if and only if group $B$'s variance satisfies $\hat{\sigma}^{2}_{BLt} \leq \tilde{\sigma}^2_L$.
\end{proof}
\subsection{Proof of Lemma 3}
\begin{proof}
Bank $H$'s VaR constraint can be written as:

$$Pr\left(\Pi_{jt} < \rho | A_{BL} = 1   \right) = \Phi\left(\dfrac{\rho_L - [1 + \nu_t](\mu_B + \mu_W)}{\hat{\sigma}_{Bt} + \sigma_W}\right)$$

Rearranging, we get:

$$\hat{\sigma}_{BHt}^{2} \leq  \tilde{\sigma}^{2, \text{pool}}_H \equiv \left(\dfrac{\rho_L - [(1+\nu_t)(\mu_W + \mu_B) + \Phi^{-1}(\alpha)\sigma_W]}{\Phi^{-1}(\alpha)}\right)^2$$
\end{proof}

\subsection{Proof of Corollary 1}
\begin{proof} This can be deduced by inspection, but note that, by convention, $\alpha < .1$, so that $\Phi^{-1}(\alpha) < 0$. \end{proof}

\subsection{Proof of Lemma 4}
\begin{proof}This follows by application of the Law of Large Numbers. \end{proof}

\subsection{Proof of Lemma 5}

\begin{proof}
In each period, Bank L has belief, $\hat{\pi}_{Bt}\sim\N\bigl(\mu,\hat{\sigma}^2_{BLt}\bigr)$, so that: 
\[
\Prob\!\left(\Pi_{jt}+s<\rho_L\right)
=\Phi\!\left(\frac{\rho_L-(\mu_W + \mu_B) -s}{\hat{\sigma}_{BLt} + \sigma_W}\right).
\]
Thus, to guarantee $\Phi\!\left(\frac{\rho_L-(\mu_W + \mu_B) -s}{\hat{\sigma}^2_{BLt} + \sigma_W}\right)\le\alpha$, we require
\[
\frac{\rho_L-(\mu_W + \mu_B)-s}{\hat{\sigma}_{BLt} + \sigma_W}\le\Phi^{-1}(\alpha),
\]
which rearranges to
\[
s\ge\rho_L-(\mu_W + \mu_B)-\Phi^{-1}(\alpha)\,(\hat{\sigma}_{BLt} + \sigma_W).
\]
The minimal such subsidy is therefore:
\[
s_t^*=\max\Bigl\{0,\;\rho_L-(\mu_W + \mu_B)-\Phi^{-1}(\alpha)\,(\hat{\sigma}_{BLt} + \sigma_W)\Bigr\}.
\]
\end{proof}

\subsection{Proof of Theorem 2}
\begin{proof}

We prove the theorem in three steps.

First, by design the subsidy ${s^*}^{(t)}$ ensures that when Bank $L$ lends to group $B$, the bank's $\VaR$ constraint is satisfied, that is,
\[
\Prob\!\:\Bigl(\Pi_{jt}+{s_t^*}<\rho\Bigr)\le\alpha.
\]
Thus, the presence of ${s_t^*}$ makes lending feasible even under Bank $L$'s initial risk assessment $\hat{\sigma}^2_{BL0}>\tilde{\sigma}^L$. Consequently, if the expected net profit (inclusive of the subsidy) is positive, Bank $L$ has an incentive to lend to group $B$.

Second, every time Bank $L$ extends a loan to group $B$, it observes a repayment drawn from $\N(\mu,\sigma_B^2)$. Let $m$ be the number of such observations. By the Strong Law of Large Numbers, 
\[
\hat{\sigma}_{BLm}^2 \to \sigma_B^2 \quad \text{almost surely as } m\to\infty.
\]
Because $\sigma^2_B<\tilde{\sigma}^{2, \text{pool}}_L$, there exists a (finite) index $m^*$ (and thus a finite time $\tau$) such that for all $m \ge m^*$ the updated estimate satisfies:
\[
\sigma^2_B\le \hat{\sigma}^2_{BLm}  < \tilde{\sigma}^{2, \text{pool}}_L.
\]

Third, once the updated variance estimate satisfies $\hat{\sigma}^2_{BLt}<\tilde{\sigma}^{2, \text{pool}}_L$, we can evaluate the required subsidy as
\[
{s^*}^{(t)}=\max\Bigl\{0,\rho_L-(\mu_W + \mu_B)-\Phi^{-1}(\alpha)\,(\hat{\sigma}_{BLt} + \sigma_W)\Bigr\}=0,
\]
by the same calculation as in Lemma 5. Thus, for all $t\ge\tau$, Bank $L$ is able to satisfy its $\VaR$ constraint without any subsidy. Since the number of observations required is finite almost surely, we conclude that with probability one there exists a finite $\tau$ such that for all $t\ge\tau$, ${s_t^*}=0$ and Bank $L$ continues lending to group $B$ without further external support.

This completes the proof.
\end{proof}
\subsection{Additional Results}
\subsubsection{Alternative Guarantees}

We extend Theorem~\ref{thm:SubsidyWorks} by demonstrating that the conclusion holds under a broader class of temporary risk-sharing mechanisms. Specifically, we show that the subsidy mechanism described in the theorem is not the only way to induce Bank $L$ to lend to group $B$ and break the subprime trap. Any guarantee mechanism that ensures Bank $L$ satisfies its $\VaR$ constraint during an exploration phase will suffice, provided it allows the bank to accumulate sufficient repayment data to update its risk estimate. We formalize this result as the following corollary.

\begin{corollary}[Robustness to Alternative Guarantees]
\label{cor:AlternativeGuarantees}
The conclusion of Theorem~\ref{thm:SubsidyWorks} holds for any temporary risk-sharing mechanism that satisfies the following condition: for each period $t$ of the exploration phase, the mechanism ensures that
\[
\Pr(\Pi_{jt} + G(t) < \rho) \le \alpha,
\]
where $G(t)$ is the guarantee provided in period $t$. After sufficient data collection, the updated variance estimate $\hat{\sigma}^2_{BLt}$ will satisfy $\hat{\sigma}^2_{BLt} < \tilde{\sigma}^{2, \text{pool}}_L$, allowing Bank $L$ to lend without further guarantees.
\end{corollary}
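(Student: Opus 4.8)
The plan is to observe that the proof of \Cref{thm:SubsidyWorks} never uses the specific functional form of $s_t^*$ beyond the single fact that it restores Bank $L$'s $\VaR$ feasibility during the exploration phase; the belief-convergence mechanism that actually drives the result depends only on the fact that Bank $L$ \emph{lends}, and thereby observes repayment draws, in each exploration period. I would therefore re-run that proof almost verbatim with the generic guarantee $G(t)$ in place of $s_t^*$, and simply check that the one property invoked along the way is exactly what the corollary's hypothesis supplies.

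Concretely, the argument proceeds in the same three steps. First, \textbf{feasibility}: by hypothesis $\Prob(\Pi_{jt}+G(t)<\rho)\le\alpha$ throughout the exploration phase, so lending to group $B$ satisfies Bank $L$'s $\VaR$ constraint precisely as the designed subsidy did; since $\mu_B>0$, the expected net return (inclusive of the guarantee) is positive, and hence Bank $L$ has a strict incentive to lend. Second, \textbf{learning}: provided the guarantee reallocates only the downside borne by the bank and leaves the borrower's repayment process unchanged, each loan to group $B$ yields an i.i.d.\ draw from $\N(\mu,\sigma_B^2)$, so after $m$ such draws the posterior mean obeys $\hat{\sigma}^2_{BLm}\to\sigma_B^2$ almost surely by the Strong Law of Large Numbers — this is exactly the learning-through-lending lemma, whose statement is independent of how lending was induced. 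Third, \textbf{termination}: because $\sigma_B^2<\tilde{\sigma}_L^{2,\text{pool}}$, almost-sure convergence furnishes a finite (random) time $\tau$ after which $\hat{\sigma}^2_{BLt}<\tilde{\sigma}_L^{2,\text{pool}}$, and by the pooled-threshold characterization Bank $L$ then meets its $\VaR$ constraint unaided, so $G(t)$ can be set to zero.

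The main point to pin down — more a matter of care than a genuine obstacle — is the interface between the \emph{random} stopping time $\tau$ and the phrase ``temporary risk-sharing mechanism.'' I would make explicit that the mechanism must supply the guarantee in every period for which the unaided estimate still violates the threshold, i.e.\ on the event $\{\hat{\sigma}^2_{BLt}>\tilde{\sigma}_L^{2,\text{pool}}\}$; since $\tau$ is finite with probability one, this exploration phase is almost surely finite, so ``temporary'' is consistent with the adaptive stopping rule. A secondary point worth a sentence is that the corollary imposes no minimality on $G(t)$: any guarantee at least as large as $s_t^*$ satisfies the hypothesized $\VaR$ inequality, so the designed subsidy is merely the smallest admissible member of this class, and over-provision affects neither the convergence argument nor the conclusion that the guarantee eventually drops to zero.
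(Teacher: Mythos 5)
Your proposal is correct and follows essentially the same route as the paper's proof: both re-run the three-step argument of Theorem~\ref{thm:SubsidyWorks} with the generic guarantee $G(t)$ in place of $s_t^*$, observing that only the $\VaR$-feasibility property of the subsidy was ever used, and both note that the designed subsidy is just the particular (minimal) instance of the admissible class. Your added remarks on the random stopping time $\tau$ and on over-provision are slightly more careful than the paper's version but do not change the argument.
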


\begin{proof}
For any such $G(t)$, the $\VaR$ constraint is satisfied during each period $t$ of the exploration phase. Specifically, Bank $L$ is guaranteed that its effective return, $\pi_B^{(t)} + G(t)$, will not fall below $\rho$ with probability exceeding $\alpha$. This ensures that the bank has an incentive to lend to group $B$, provided the expected return is positive.
The guarantee mechanism defined in the corollary generalizes the subsidy mechanism from Theorem~\ref{thm:SubsidyWorks}. In the original subsidy framework, the guarantee function is given explicitly by
$G(t) = {s_t^*} = \max\{ 0, \rho_L-(\mu_W + \mu_B)-\Phi^{-1}(\alpha)\,(\hat{\sigma}_{BLt} + \sigma_W) \}$,
which satisfies the condition
$\Pr(\hat{\Pi}_{jt} + {s^*_t} < \rho) \le \alpha$.
The corollary allows for any $G(t)$ that satisfies the same probabilistic constraint.
Once Bank $L$ updates its risk estimate to $\hat{\sigma}^2_{BLt} < \tilde{\sigma}^{2, \text{pool}}_L$, the $\VaR$ constraint is naturally satisfied without external support. Specifically, the guarantee $G(t)$ is no longer required, as the bank can safely lend to group $B$ on its own. Formally, this follows from Lemma 2.

Thus, any guarantee mechanism that ensures the bank's effective return satisfies the $\VaR$ condition during a finite exploration phase will induce Bank $L$ to lend, learn group $B$'s true risk, and ultimately eliminate the need for the guarantee. This completes the proof.
\end{proof}

\subsection{Extension to Expected Shortfall}\label{expected_Shortfall}

\subsubsection{Variance Thresholds under Expected Shortfall}

We now derive the variance threshold under Expected Shortfall and show it is more conservative than the VaR constraint. The logic of the argument is the same throughout, however. 

When Bank L lends to both groups W and B:
$$\Pi_L = \pi_W + \pi_B$$
where from the bank's perspective (using its estimate $\hat{\sigma}^2_B$):
$$\Pi_L \sim \mathcal{N}(\mu_W + \mu_B, \sigma^2_W + \hat{\sigma}^2_B)$$

$$\text{VaR}_\phi(\Pi_L) = (\mu_W + \mu_B) + \sigma_W + \hat{\sigma}_{BLt} \cdot \Phi^{-1}(\phi)$$

Using the definition $\text{ES}_\alpha(X) = \frac{1}{\alpha} \int_0^\alpha \text{VaR}_\phi(X) d\phi$:

\begin{align*}
\text{ES}_\alpha(\Pi_L) &= \frac{1}{\alpha} \int_0^\alpha \left[(\mu_W + \mu_B) + \sigma_W + \hat{\sigma}_{BLt} \cdot \Phi^{-1}(\phi)\right] d\phi\\
&= (\mu_W + \mu_B) + \frac{\sigma_W + \hat{\sigma}_{BLt}}{\alpha} \int_0^\alpha \Phi^{-1}(\phi) d\phi
\end{align*}

Using the fact that $\int_0^\alpha \Phi^{-1}(\phi) d\phi = -\phi(\Phi^{-1}(\alpha))$:

$$\text{ES}_\alpha(\Pi_L) = (\mu_W + \mu_B) - \sigma_W + \hat{\sigma}_{BLt} \cdot \frac{\phi(\Phi^{-1}(\alpha))}{\alpha}$$

Bank L requires $\text{ES}_\alpha(\Pi_L) \geq \rho_L$:
$$(\mu_W + \mu_B) - \sigma_W + \hat{\sigma}_{BLt} \cdot \frac{\phi(\Phi^{-1}(\alpha))}{\alpha} \geq \rho_L$$

Rearranging for $\hat{\sigma}^2_B$:
$$\sigma_W + \hat{\sigma}_{BLt} \leq \frac{(\mu_W + \mu_B) - \rho_L}{\frac{\phi(\Phi^{-1}(\alpha))}{\alpha}}$$

Squaring both sides:
$$\sigma^2_W + \hat{\sigma}^2_B \leq \left(\frac{(\mu_W + \mu_B) - \rho_L}{\frac{\phi(\Phi^{-1}(\alpha))}{\alpha}}\right)^2$$

Therefore:
$$\hat{\sigma}^2_B \leq \left(\frac{(\mu_W + \mu_B) - \rho_L}{\frac{\phi(\Phi^{-1}(\alpha))}{\alpha}}\right)^2 - \sigma^2_W \equiv \tilde{\sigma}^{2,\text{pool},ES}_L$$

From Lemma 3, the VaR threshold is:
$$\tilde{\sigma}^{2,\text{pool},\text{VaR}}_L = \left(\frac{\rho_L - (\mu_W + \mu_B) - \Phi^{-1}(\alpha)\sigma_W}{\Phi^{-1}(\alpha)}\right)^2$$

We next show that $\tilde{\sigma}^{2,\text{pool},ES}_L < \tilde{\sigma}^{2,\text{pool},\text{VaR}}_L$, and hence, that the ES threshold is more restrictive.

Since $\frac{\phi(\Phi^{-1}(\alpha))}{\alpha} > \Phi^{-1}(\alpha)$, the denominator in the ES threshold calculation is larger, yielding:

$$\tilde{\sigma}^{2,\text{pool},ES}_L < \tilde{\sigma}^{2,\text{pool},\text{VaR}}_L$$

This shows that Expected Shortfall creates a more conservative variance threshold than VaR, making it harder for Bank L to lend to group B initially. 

We can solve for the optimal subsidy under Expected Shortfall analogously:

\subsubsection{Optimal Subsidy under Expected Shortfall}

In order to lend to group B, Bank L requires:
$$\text{ES}_\alpha(\Pi_L + s) \geq \rho$$

where Expected Shortfall with subsidy $s$ is:
$$\text{ES}_\alpha(\Pi_{Lt} + s) = (\mu_W + \mu_B + s) - (\sigma_W + \hat{\sigma}_{BLt})  \frac{\phi(\Phi^{-1}(\alpha))}{\alpha}$$

Rearranging for $s$:
$$s \geq \rho - (\mu_W + \mu_B) + (\sigma_W + \hat{\sigma}_{BLt})  \frac{\phi(\Phi^{-1}(\alpha))}{\alpha}$$

The optimal (minimum) subsidy is:
$$s^*_{ES,t} = \max\left\{0, \rho - (\mu_W + \mu_B) + (\sigma_W + \hat{\sigma}_{BLt})\frac{\phi(\Phi^{-1}(\alpha))}{\alpha}\right\}$$

The VaR-derived subsidy from Lemma 6 is:
$$s^*_{VaR,t} = \max\left\{0, \rho - (\mu_W + \mu_B) -\Phi^{-1}(\alpha) (\sigma_W + \hat{\sigma}_{BLt})\right\}$$

Since $\frac{\phi(\Phi^{-1}(\alpha))}{\alpha} > \Phi^{-1}(\alpha)$ for $\alpha < .1$, (e.g. 2.063 > 1.645 when $\alpha = 0.05$):

$$s^*_{ES,t} > s^*_{VaR,t}$$

The Expected Shortfall-based subsidy is larger, reflecting the fact that Expected Shortfall is a more conservative risk criterion than Value-at-Risk.

\end{document}